\newcommand{\ie}{{i.e.}}
\newcommand{\ea}{et\ al.}
\newcommand{\CppPlus}{\protect\hspace{-.1em}\protect\raisebox{.35ex}{\smaller{\smaller\textbf{+}}}}
\newcommand{\Cpp}{\mbox{C\CppPlus\CppPlus}}
\renewcommand{\epsilon}{\varepsilon}
\newcommand{\bigO}{O}
\newcommand{\Poly}{\ensuremath{\mathcal{P}}}
\newcommand{\NP}{\ensuremath{\mathcal{NP}}}
\newcommand{\N}{\mathbb{N}}
\newcommand{\R}{\mathbb{R}}
\newcommand{\Rplus}{\mathbb{R}_{\mathsmaller{\geq 0}}}
\newcommand{\OPT}{\textsl{\small OPT}}
\newcommand{\WRC}{\textsf{WRC}}
\newcommand{\Polygon}{P}
\newcommand{\Edges}{E}
\newcommand{\Edge}[2]{\overline{#1 #2}}
\newcommand{\Holes}{\mathcal{H}}
\newcommand{\Interior}{\mathcal{I}}
\newcommand{\Area}{\mathcal{A}}
\newcommand{\Rectangles}{\mathcal{R}}
\newcommand{\Cover}{\mathcal{C}}
\newcommand{\AllEdges}{\mathcal{E}}
\newcommand{\AllVertices}{\mathcal{V}}
\newcommand{\Rect}{R}
\newcommand{\BaseRect}{B}
\newcommand{\BBox}{\mathbb{B}}
\newcommand{\BaseRects}{\mathcal{B}}
\newcommand{\GridRects}{\mathcal{D}}
\newcommand{\MaxRects}{\mathcal{M}}
\newcommand{\RectPow}{\Gamma}
\newcommand{\BaseRectGraph}{G^*}
\newcommand{\TopLeft}{\tikz\draw[thick] (0,0) |- +(1.3ex,1.3ex);}
\newcommand{\BottomRight}{\tikz\draw[thick] (0,0) -| +(1.3ex,1.3ex);}
\newcommand{\Cost}{c}
\newcommand{\InstanceSet}[1]{\textsf{\textbf{#1}}}
\newcommand{\AlgName}[1]{\textsf{#1}}
\newcommand{\erclogowrapped}[1]{%
\setlength\intextsep{0pt}%
\begin{wrapfigure}[3]{r}{#1}%
\includegraphics[width=#1]{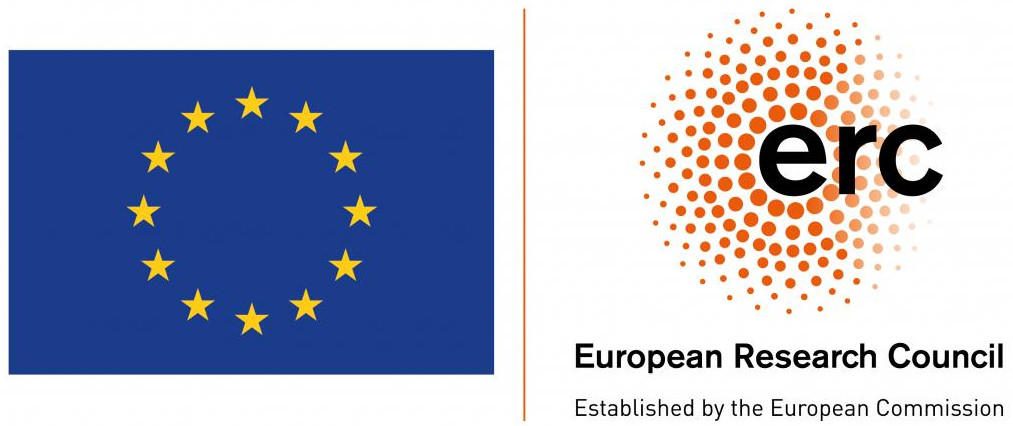}%
\end{wrapfigure}%
}
\title{Covering Rectilinear Polygons\\with Area-Weighted Rectangles}
\author{Kathrin Hanauer}                                                         %
\affiliation{University of Vienna, Faculty of Computer Science, Vienna, Austria} %
\author{Martin P. Seybold}                                                       %
\affiliation{University of Vienna, Faculty of Computer Science, Vienna, Austria} %
\author{Julian Unterweger}                                                       %
\affiliation{University of Vienna, Faculty of Computer Science, Vienna, Austria} %
\date{}
\begin{document} %

\maketitle

\begin{abstract}
Representing a polygon using a set of simple shapes has numerous applications in different use-case scenarios.
We consider the problem of covering the interior of a rectilinear polygon with holes by a set of area-weighted, axis-aligned rectangles
such that the total weight of the rectangles in the cover is minimized.
Already the unit-weight case is known to be \NP{}-hard and the general problem has, to the best of our knowledge, not been studied experimentally before.

We show a new basic property of optimal solutions of the weighted problem.
This allows us to speed up existing algorithms for the unit-weight case, obtain an improved ILP formulation for both the weighted and unweighted problem, and develop several approximation algorithms and heuristics for the weighted case.

All our algorithms are evaluated in a large experimental study on \num{186 837} polygons
combined with six cost functions,
which provides evidence that our algorithms are both fast and yield close-to-optimal solutions in practice.

\end{abstract}

\section{Introduction}%

Representing a polygon's interior using simpler shapes is a relevant problem in many fields, including integrated circuit design~\cite{Hegedus_1982}, image compression~\cite{Koch_Marenco_2022} and image construction~\cite{Construction}.
When using rectangles, the main goal is usually to minimize the number of rectangles used to represent the polygon. 
However, in some applications, there may be a cost associated with both the number of shapes used as well as their area.
If shapes may overlap, this can lead to a solution with more shapes costing less than a solution with fewer shapes.

One such example are 2D video games in which sets of individual ``tiles'' may be more compactly represented as ``objects''. 
Each object requires a certain amount of time to initialize itself, after which each of its tiles requires a certain amount of time to be rendered. 
Depending on how long both actions take, it may be faster to have more objects, consisting of fewer total tiles, or fewer objects, consisting of more total tiles. 
Note that for the purpose of this comparison, if two or more tiles overlap, all of them are still rendered and thus contribute to the total runtime.

This motivates our study of the following \textsf{Weighted Rectangle Cover} problem (\WRC):
Given a rectilinear polygon, which may contain holes, together with a cost
function, find a set of rectangles which covers the polygon's interior with minimum total cost.
The problem generalizes the unit-weight case, which is known to be $\NP$-hard~\cite{Masek_1978,DBLP:journals/jal/CulbersonR94}.
Although results on approximability and experimental studies for the
unit-weight case and the closely related rectilinear picture compression problem
exist~\cite{Kumar_Ramesh_1999,DBLP:journals/algorithmica/BermanD97,Heinrich-Litan_Lubbecke_2007,Koch_Marenco_2022},
little is known about the weighted case. 

In this work, we initiate the study of the weighted rectangle cover problem.
Motivated by the applications above, we focus on cost functions
that can be expressed as a linear function of the rectangle's area (`area
cost') plus a constant (`creation cost'), but many of our results can be transferred %
to other cost functions.
\smallskip

\noindent\textbf{Contributions} %
\begin{itemize}
\item
We introduce the concept of base rectangles as an alternative approach to (Hanan) grid rectangles and show that they can reduce the complexity of the discretized problem significantly.
\item
We prove that there always exists an optimal solution that is built from base rectangles.
\item
We give the first polynomially-sized ILP formulation for the \WRC{} problem, which also improves the
ILP formulations for the unit-weight case, due to base rectangles.
\item
We develop and analyze several algorithms to solve the \WRC{} problem quickly that yield, in practice, a high solution quality.
In particular, we introduce four weight-aware postprocessors, which can also be used
to adapt algorithms for the unweighted case.
\item
We report on a large experimental evaluation of ten algorithms on \num{186 837} non-trivial polygons in combination with six
cost functions, which shows that our new algorithms are not only faster than the greedy weighted set cover algorithm, but also produce \emph{close-to-optimal} solutions in practice.
\end{itemize}

\smallskip
\noindent\textbf{Paper Outline.}
\autoref{sect:related} and \autoref{sect:prelim} start with related work and preliminaries.
In \autoref{sect:baserects}, we introduce Base Rectangles. %
We describe our algorithms and postprocessors in \autoref{sect:algorithms}, their experimental
evaluation in \autoref{sect:experiments}, and conclude in \autoref{sect:conclusion}.
\begin{figure}[tb]
    \centering
        \begin{tikzpicture}[thin,fill opacity=0.4,draw opacity=0.6,scale=.25] %
            \draw[fill={rgb,255:red,52; green,208; blue,218}] (1.0, 3.0) rectangle (3.0, 5.0);
            \draw[fill={rgb,255:red,254; green,54; blue,107}] (1.0, 6.0) rectangle (15.0, 10.0);
            \draw[fill={rgb,255:red,124; green,175; blue,27}] (1.0, 11.0) rectangle (3.0, 13.0);
            \draw[fill={rgb,255:red,201; green,198; blue,104}] (2.0, 4.0) rectangle (14.0, 12.0);
            \draw[fill={rgb,255:red,127; green,4; blue,109}] (4.0, 3.0) rectangle (12.0, 13.0);
            \draw[fill={rgb,255:red,20; green,81; blue,72}] (5.0, 2.0) rectangle (11.0, 14.0);
            \draw[fill={rgb,255:red,171; green,201; blue,72}] (6.0, 1.0) rectangle (10.0, 15.0);
            \draw[fill={rgb,255:red,140; green,213; blue,145}] (13.0, 3.0) rectangle (15.0, 5.0);
            \draw[fill={rgb,255:red,193; green,15; blue,62}] (13.0, 11.0) rectangle (15.0, 13.0);            
        \end{tikzpicture}
        \quad
        \begin{tikzpicture}[thin,fill opacity=0.4,draw opacity=0.6,scale=.25] %
            \draw[fill={rgb,255:red,90; green,103; blue,128}] (1.0, 3.0) rectangle (3.0, 5.0);
            \draw[fill={rgb,255:red,239; green,11; blue,144}] (1.0, 6.0) rectangle (2.0, 10.0);
            \draw[fill={rgb,255:red,254; green,70; blue,199}] (1.0, 11.0) rectangle (3.0, 13.0);
            \draw[fill={rgb,255:red,10; green,120; blue,108}] (2.0, 4.0) rectangle (4.0, 12.0);
            \draw[fill={rgb,255:red,228; green,55; blue,108}] (4.0, 3.0) rectangle (5.0, 13.0);
            \draw[fill={rgb,255:red,164; green,107; blue,109}] (5.0, 2.0) rectangle (11.0, 14.0);
            \draw[fill={rgb,255:red,136; green,49; blue,48}] (6.0, 1.0) rectangle (10.0, 2.0);
            \draw[fill={rgb,255:red,194; green,232; blue,215}] (6.0, 14.0) rectangle (10.0, 15.0);
            \draw[fill={rgb,255:red,212; green,192; blue,89}] (11.0, 3.0) rectangle (12.0, 13.0);
            \draw[fill={rgb,255:red,184; green,112; blue,126}] (12.0, 4.0) rectangle (14.0, 12.0);
            \draw[fill={rgb,255:red,151; green,35; blue,97}] (13.0, 3.0) rectangle (15.0, 5.0);
            \draw[fill={rgb,255:red,136; green,30; blue,226}] (13.0, 11.0) rectangle (15.0, 13.0);
            \draw[fill={rgb,255:red,249; green,79; blue,107}] (14.0, 6.0) rectangle (15.0, 10.0);            
        \end{tikzpicture}
        \quad
        \begin{tikzpicture}[thin,fill opacity=0.4,draw opacity=0.6,scale=.25] %
            \draw[fill={rgb,255:red,89; green,97; blue,32}] (1.0, 3.0) rectangle (3.0, 4.0);
            \draw[fill={rgb,255:red,108; green,51; blue,60}] (1.0, 4.0) rectangle (15.0, 5.0);
            \draw[fill={rgb,255:red,64; green,158; blue,238}] (1.0, 6.0) rectangle (15.0, 10.0);
            \draw[fill={rgb,255:red,77; green,55; blue,255}] (1.0, 11.0) rectangle (3.0, 13.0);
            \draw[fill={rgb,255:red,103; green,114; blue,71}] (2.0, 5.0) rectangle (14.0, 6.0);
            \draw[fill={rgb,255:red,173; green,177; blue,128}] (2.0, 10.0) rectangle (14.0, 11.0);
            \draw[fill={rgb,255:red,189; green,86; blue,83}] (3.0, 11.0) rectangle (13.0, 12.0);
            \draw[fill={rgb,255:red,116; green,31; blue,9}] (4.0, 3.0) rectangle (12.0, 4.0);
            \draw[fill={rgb,255:red,30; green,189; blue,63}] (4.0, 12.0) rectangle (12.0, 13.0);
            \draw[fill={rgb,255:red,116; green,162; blue,15}] (5.0, 2.0) rectangle (11.0, 3.0);
            \draw[fill={rgb,255:red,161; green,47; blue,30}] (5.0, 13.0) rectangle (11.0, 14.0);
            \draw[fill={rgb,255:red,32; green,153; blue,206}] (6.0, 1.0) rectangle (10.0, 2.0);
            \draw[fill={rgb,255:red,245; green,221; blue,226}] (6.0, 14.0) rectangle (10.0, 15.0);
            \draw[fill={rgb,255:red,89; green,163; blue,188}] (13.0, 3.0) rectangle (15.0, 4.0);
            \draw[fill={rgb,255:red,126; green,105; blue,18}] (13.0, 11.0) rectangle (15.0, 13.0);           
        \end{tikzpicture}
\caption{Three optimal covers $C_1$ (left), $C_2$ (center), $C_3$ (right) for the same polygon with different parameters. \\
For $C_1$, $\beta = 0$, $|C_1| = 9$, and $\sum_{R \in C_1}{\Area(R)} = 376$. \\
For $C_2$, $\frac \beta \alpha = \frac 1 3$, $|C_2| = 13$, and $\sum_{R \in C_2}{\Area(R)} = 156$.\\
For $C_3$, $\frac \beta \alpha  = 2$, $|C_3| = 15$, and $\sum_{R \in C_3}{\Area(R)} = 152$.
}%
\label{fig:area-objective}
\end{figure}

\section{Related Work on Rectilinear Polygons}\label{sect:related}%
Unless denoted otherwise, we assume in the following that the polygon under
consideration is rectilinear and all rectangles are axis-aligned.
\paragraph{Weighted Set Cover.}
Given $k$ subsets $\mathcal{S} = (S_1, S_2, \ldots, S_k)$ of a universe $U$,
each of which has an associated weight $w(S_j)$,
this problem asks to pick a subset $\mathcal{S'} \subseteq \mathcal{S}$ whose union equals $U$
and minimizes $\sum_{S \in \mathcal{S'}} w(S)$.
Simple greedy $\bigO(\log{|U|})$-approximation algorithms exist both for the
unweighted~\cite{Lovasz_1975,Johnson_1973} and for the weighted
case~\cite{Chvatal_1979}.
Feige~\cite{Feige_1998} showed that this approximation ratio is tight unless
$\Poly = \NP$.
\emph{Weighted geometric set cover} refers to a family of special
cases, where the subsets $S_j$ are geometric
objects~\cite{Varadarajan_2010}. %
Even \ea{}~\cite{EVEN2005358} give a randomized
$\bigO(\log{\OPT})$-approximation algorithm for the special case of
geometric objects with so-called ``low VC-dimension'' %
with all weights greater than $1$. %
Their algorithm is based on solving the relaxed linear program of the problem
instance and then rounding it by probabilistically finding an $\epsilon$-net
of small size.
As $\OPT$ can be arbitrarily larger than $|U|$~\cite{Agarwal_Ezra_Fox_2022},
this is not necessarily an improvement over the simple greedy algorithm.
Varadarajan~\cite{Varadarajan_2010} gives a similar probabilistic algorithm
which uses a ``quasi-uniform'' sampling of $\epsilon$-nets to round the linear
program solution and has an approximation ratio of $2^{\bigO(\log^*{k})} \log{k}$,
with $k = |\mathcal{S}|$.
This improves over previous results if the \emph{union
complexity}, %
\ie, the complexity of the boundary of the union of all geometric objects,
is near-linear.
Note that the union complexity is in $\bigO(r^2)$ for $r$ rectangles~\cite{DBLP:journals/combinatorics/KellerS18}.

There are also interesting recent results for the unweighted case~\cite{Agarwal_Pan_2014,Bus_Mustafa_Ray_2018},
which are likewise centered around the computation of $\epsilon$-nets and largely theoretical, though an efficient
computation of $\epsilon$-nets for disks has been implemented~\cite{Bus_Mustafa_Ray_2018}.

\paragraph{Rectilinear Polygon Partition.}
A \emph{partition} (also called \emph{decomposition} or \emph{dissection}) of a rectilinear polygon
into a minimum set of non-overlapping rectangles can, in contrast to the
overlapping case, be computed in polynomial time even if the polygon contains
holes.
For a polygon with $n$ vertices,
Ohtsuki~\cite{Ohtsuki_1982} gives an algorithm with running time
$\bigO(n^{2.5})$, which was later improved to $\bigO(n^{1.5} \log{n})$~\cite{Hiroshi_Takao_1986}.
The algorithm was in fact discovered several
times~\cite{data-organization-2d,DBLP:journals/cvgip/FerrariSS84,DBLP:conf/wg/Eppstein09} and is
described in more detail in \autoref{sect:partition}.

A related problem, where instead of minimizing the number of rectangles, the
goal is to minimize the total length of all segments used to create the
partition is studied by Lingas \ea{}~\cite{LPRS1982}.
They show that the problem can be solved in time $\bigO(n^4)$ for arbitrary rectilinear hole-free polygons with $n$ vertices, and in $\bigO(n^3)$ for ``histogram'' polygons.
In the presence of holes, the problem becomes \NP-hard.

\paragraph{Unweighted Rectangle Cover.}
The problem of covering a rectilinear polygon with holes by axis-aligned
rectangles was proven to be $\NP$-hard by Masek~\cite{Masek_1978} in 1978,
and remains $\NP$-hard for hole-free polygons~\cite{DBLP:journals/jal/CulbersonR94}.

Assume that the input polygon has $n$ vertices (including the vertices of holes).
Since the unweighted problem is known to be a special case of the set cover problem with a universe of size
polynomial in $n$, it can be approximated likewise within an $\bigO(\log{n})$ factor.
Furthermore, a polynomial-time $\bigO(\sqrt{\log{n}})$-approximation algorithm
exists, which is due to Kumar and Ramesh~\cite{Kumar_Ramesh_1999}.
In \autoref{sect:greedy-strip}, we will give a modified version of this algorithm
and show that it can be implemented with running time
$\bigO(n^2)$. %
Berman and DasGupta~\cite{DBLP:journals/algorithmica/BermanD97} showed that the
rectangle cover problem does not admit a polynomial-time approximation scheme
unless $\Poly = \NP$. %

Improved results exist for certain special cases:
For polygons that are vertically convex, Franzblau and
Kleitman~\cite{DBLP:journals/iandc/FranzblauK84} give a $\bigO(n^2)$-time
algorithm that computes an optimal solution.
A polygon is vertically (horizontally) convex if the straight-line segment
connecting any two interior points that have the same $x$-coordinate
($y$-coordinate) lies in the interior of the polygon.
In particular, this implies that the polygon is hole-free.
Franzblau~\cite{DBLP:journals/siamdm/Franzblau89} showed that
for a polygon with $h$ holes, a partition has at most $2\cdot\OPT+ h -1$ rectangles,
thus giving a $2$-approximation algorithm for hole-free polygons.
She also gives a $\bigO(n \log n)$-time algorithm that produces a rectangle
cover of size $\bigO(\OPT\cdot \log \OPT)$. 
The problem was also studied in~\cite{Najman14}.

\paragraph{Rectilinear Picture Compression.}
Given a binary matrix, the \emph{rectilinear picture compression} problem asks
to represent it as a minimum set of rectangular submatrices containing only
one-entries, such that every one-entry is contained in at least one of our
submatrices.
This is equivalent to an unweighted and integral version of the rectangle cover
problem, where in addition, the size of the input equals the \emph{area} of the
bounding box of the represented polygon, \ie, $w \cdot h$
if $w$ and $h$ denote the width and height of the input polygon's bounding box.
Note that $w$ and $h$ are \emph{not} bounded by the number of vertices of the
polygon in general.

Heinrich-Litan and L\"{u}bbecke~\cite{Heinrich-Litan_Lubbecke_2007} %
give primal and dual integer linear programming formulations for this problem,
further ones are discussed by Koch and Marenco~\cite{Koch_Marenco_2022}.
An algorithm that produces good results in practice is
presented by the same authors~\cite{Heinrich-Litan_Lubbecke_2007}. 
It is based on the greedy set cover algorithm (cf.~\autoref{sect:greedy-set}), but
extends it by picking certain rectangles which are guaranteed to be part of
some optimal cover
before invoking the greedy set cover algorithm to cover the
remaining parts of the polygon, pruning any redundant rectangles in a
postprocessing step.
This algorithm is based on an earlier one~\cite{Hannenhalli_Hubell_Lipshutz_Pavel_2002}, 
which has worst-case runtime complexity $\bigO(\max\{w,h\}^5)$.

Fast approaches for large instances with potentially improved quality are
discussed by Koch and Marenco~\cite{Koch_Marenco_2022}.
Their approaches work by finding an initial cover for the polygon by looking
for large rectangles, grouping the pixels of the polygon into so-called atomic
rectangles using this initial cover and then choosing a subset of the initial
cover to cover the obtained atomic rectangles.

\emph{To the best of our knowledge, no specific algorithms or results for the
weighted rectangle cover problem exist.}

\section{Preliminaries}\label{sect:prelim}%
We consider a non-self-intersecting, rectilinear polygon $\Polygon = (\Edges, \Holes)$, which is specified by a
cyclic, alternating list of horizontal and vertical \emph{edges} $\Edges$ 
along with a set of \emph{holes} $\Holes$, where each hole $H \in \Holes$ is
itself a hole-free, non-self-intersecting, rectilinear polygon that is strictly contained in $\Polygon$.
A polygon is said to be \emph{hole-free} if $\Holes = \emptyset$.
An \emph{edge}
$e = \Edge{u}{v}$
is either a horizontal or vertical segment
connecting the two \emph{vertices} $u$ and $v$, where $u, v \in \Rplus^2$.
A vertex $v$ is \emph{integral} if $v \in \N^2$.
For every pair of subsequent edges
$e_1 = \Edge{u_1}{v_1}$ and $e_2 = \Edge{u_2}{v_2}$ from $E$, if
$v_1 = u_2$ then $e_1$ and $e_2$ are called \emph{adjacent}.
Thus, $E$ is fully determined by either the subsequence of horizontal %
or the subsequence of vertical edges.
A pair of non-adjacent edges $e_1, e_2$ is said to be \emph{intersecting} if the intersection
of their segments is nonempty.
A polygon is \emph{self-intersecting} if it has a pair of intersecting edges.

In the following, we assume that all polygons are rectilinear,
non-self-intersecting, and that two holes of the same polygon may only
intersect in a single common vertex%
\footnote{Note that a self-intersecting
polygon can be represented by a set of non-self-intersecting polygons and that
two intersecting holes can be represented by a single, larger hole.}.

The \emph{interior} $\Interior(\Polygon)$ of a polygon $\Polygon = (\Edges, \Holes)$ is
the bounded region enclosed by $\Edges$ minus the interiors of all holes in $\Holes$.
A polygon $C$ is said to be \emph{(strictly) contained} in a polygon $P$
if $\Interior(C)$ is a
(strict) subset of $\Interior(P)$.
The \emph{area} $\Area(\Polygon)$ is the area of $\Interior(\Polygon)$ in $\R^2$.
A point $x \in \Interior(\Polygon)$ is an \emph{inner point} if $x$ is not
part of the boundary of neither $\Polygon$ nor any of its holes.
A vertex of a polygon is called \emph{convex} if the interior angle between its two edges is
$\frac{\pi}{2}$ and \emph{concave} otherwise.
Note that the definition of convex and concave depends on the interior, \ie,  a convex vertex of a polygon $H$ may be
concave w.r.t.\ another polygon $P$ if $H$ is added as a hole to $P$, see also \autoref{fig:concave-convex}.
\begin{figure}[tb]
    \centering
    \includegraphics[scale=.9]{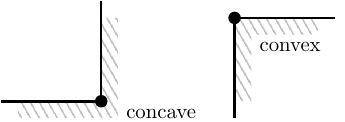} %
    \caption{Concave and convex vertex w.r.t.\ the interior of the polygon.}\label{fig:concave-convex}
\end{figure}

For $\Polygon = (\Edges, \Holes)$,
we use $\AllEdges(\Polygon)$ and $\AllVertices(\Polygon)$ to refer to the union
of all edges and vertices in $\Edges$ and $\Holes$, respectively,
\ie, $\AllEdges(\Polygon) = \Edges \cup \bigcup_{H \in \Holes} \AllEdges(H)$,
and $\AllVertices(\Polygon)$ is the set of all vertices in $\AllEdges(\Polygon)$.
We let $n(\Polygon)$ denote the \emph{size} of $\Polygon$, where $n(\Polygon) := \frac{1}{2}|\AllEdges(\Polygon)|$, \ie,
the number of horizontal (or vertical) edges in $\AllEdges(\Polygon)$.
For brevity, we just use $n$ if no ambiguity arises.
Note that $|\AllVertices(\Polygon)| \leq 2n$ due to vertex-intersecting holes.

A \emph{rectangle} $\Rect$ is a hole-free polygon with $n(\Rect) = 2$.
For simplicity, we represent $R$ as a pair of vertices $(x_1, x_2)$,
where $x_1 =: \TopLeft(\Rect)$ and $x_2 =: \BottomRight(\Rect)$ denote the top-left and
bottom-right vertex of $\Rect$, respectively.
The \emph{bounding box} $\BBox(\Polygon)$ %
of a polygon $P$ is the rectangle $R$ such that $\Interior(P) \subseteq 
\Interior(R)$ 
and $\Area(R)$ is minimal.
A \emph{pixel} is a unit-area, square rectangle with only integral vertices. %
A rectangle $M$ is said to be a \emph{maximal} rectangle of $\Polygon$ 
if $\Interior(M) \subseteq \Interior(\Polygon)$ and extending $M$ in any
direction would violate the former constraint.
We denote the set of all maximal rectangles of $\Polygon$ by $\MaxRects(\Polygon)$.

Given a polygon $\Polygon$, a set of rectangles $\Rectangles$ is a \emph{rectangle cover}
(or \emph{cover} for short)
if (i) for each $\Rect \in \Rectangles$, $\Interior(R) \subseteq \Interior(\Polygon)$
and (ii) $\Interior(\Polygon) = \bigcup_{\Rect \in \Rectangles} \Interior(R)$.
A cover is a \emph{partition} if the pair-wise intersections of
the rectangles contain no inner points.
For a set of rectangles $\Rectangles$ and rectangle $\Rect$ that is not necessarily
part of $\Rectangles$, we use the shorthand notation
$\Rectangles_{\supseteq \Rect} := \{ \Rect' \in \Rectangles \colon \Interior(\Rect') \supseteq \Interior(\Rect) \}$.

We study the \textsc{Weighted Rectangle Cover} problem (\WRC{}) as follows:
Given a (non-self-intersecting, rectilinear) polygon $\Polygon = (\Edges, \Holes)$,
as well as $\alpha, \beta \in \Rplus$,
find a rectangle cover $\Cover$ of $\Polygon$ that minimizes
$\sum_{\Rect \in \Cover} \Cost_{\alpha,\beta}(\Rect)$,
where the \emph{cost} of a rectangle $\Rect$ is $\Cost_{\alpha,\beta}(\Rect) := \alpha + \beta\cdot\Area(\Rect)$.

Note that ratio $\frac \beta \alpha$ allows for a smooth transition between the objective of the $\NP$-hard unit-weight Rectangle-Cover problem ($\beta =0$) and the Rectilinear Polygon Partition problem ($\beta \to \infty$), which can be solved in polynomial time, see \autoref{fig:area-objective} for examples.

\section{Base Rectangles and Rectangle Powersets}\label{sect:baserects}%
In this section, we introduce a coarse discretization into \emph{base rectangles} and prove that there is always an optimal \WRC{} solution that can be derived from them.
We show that such discretizations can be computed efficiently, which incidentally allows us to improve upon pixel-based and grid-based approaches for the unit-weight case.
Further, their use leads to speedups for various algorithms for the weighted and unweighted problem on real-world instances, which we empirically provide evidence for in Section~\ref{sect:experiments}. %

Before we introduce our new concept of base rectangles, we give the definition
of grid rectangles, also known as \emph{Hanan grid}.

\begin{Definition}[Grid Rectangles~\cite{Kumar_Ramesh_1999}]
Extend each edge $e \in \AllEdges(\Polygon)$ of a polygon $\Polygon$ to infinity on both ends and consider all intersections of all these lines.
This partitions $\Interior(\Polygon)$ into a set of grid rectangles $\GridRects(\Polygon)$.
\end{Definition}

\begin{Definition}[Base Rectangles]
From each concave vertex $v \in \AllVertices(\Polygon)$ of a polygon $\Polygon$, extend one horizontal and one vertical ray from $v$ through $\Interior(P)$ until it meets with an edge in $\AllEdges(\Polygon)$.
This partitions $\Interior(\Polygon)$ into a set of base rectangles $\BaseRects(\Polygon)$.
\end{Definition}

Note that grid rectangles are a refinement of base rectangles.
\autoref{fig:base-rects} shows two examples with subdivisions of a polygon into base rectangles,
which in one case coincides with a subdivision into grid rectangles, but makes a large
difference in the other.
A polygon with only integral vertices can always be partitioned
into a set of pixels and its set of base rectangles (grid rectangles) is a coarsening
of this pixelwise partition.  %
\begin{figure}[]
\centering
    \includegraphics[width=.8\columnwidth]{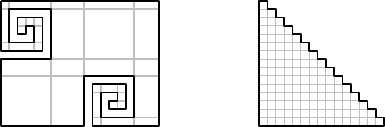} %

\caption{Two polygons (without holes) and their partition into base rectangles. %
The left shows that the number of base rectangles can be significantly smaller than the number of grid rectangles,
whereas both coincide on the right.}%
\label{fig:base-rects}
\end{figure}

\begin{lemma}\label{lem:num_baserects}
Given a polygon $\Polygon$ with $2n$ edges, the number of base rectangles $|\BaseRects(\Polygon)|$ is $\bigO(n^2)$ in the worst-case.
\end{lemma}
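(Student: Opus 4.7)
The plan is a direct counting argument using a planar arrangement of axis-aligned segments.

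First, I would bound the number of concave vertices of $\Polygon$. Since $\Polygon$ has $|\AllEdges(\Polygon)| = 2n$ edges, the preliminaries give $|\AllVertices(\Polygon)| \leq 2n$; in particular, $\Polygon$ has at most $2n$ concave vertices. By definition, each concave vertex contributes exactly one horizontal and one vertical chord that is drawn into $\Interior(\Polygon)$ and truncated by an edge in $\AllEdges(\Polygon)$. Together with the $\bigO(n)$ horizontal and $\bigO(n)$ vertical edges in $\AllEdges(\Polygon)$, this yields an arrangement $\mathcal{L}$ consisting of $\bigO(n)$ horizontal and $\bigO(n)$ vertical axis-aligned segments in the plane.

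Next, I would bound the number of faces of $\mathcal{L}$. Any intersection point in $\mathcal{L}$ involves exactly one horizontal and one vertical segment, so the number of vertices of $\mathcal{L}$ is $\bigO(n^2)$. Applying Euler's formula to the induced planar straight-line graph, or equivalently observing that $h$ horizontal and $v$ vertical segments partition the plane into $\bigO(hv)$ cells, gives $\bigO(n^2)$ faces in total.

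Finally, I would identify each base rectangle with one such face. By construction the boundary of every element of $\BaseRects(\Polygon)$ consists of pieces of segments in $\mathcal{L}$, and no further segment of $\mathcal{L}$ can cross its interior, so each base rectangle is exactly a rectangular face of $\mathcal{L}$ that lies in $\Interior(\Polygon)$. Hence $|\BaseRects(\Polygon)| \in \bigO(n^2)$. The only subtle step is this identification between $\BaseRects(\Polygon)$ and the rectangular faces of $\mathcal{L}$ inside $\Interior(\Polygon)$, which is a definition-chase: any axis-aligned chord that would subdivide $\Interior(\Polygon)$ beyond $\mathcal{L}$ would have to emanate from a concave vertex of $\Polygon$ and therefore already belong to $\mathcal{L}$.
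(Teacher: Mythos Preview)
Your argument is correct and follows essentially the same route as the paper: both bound the number of concave vertices by $2n$, obtain an arrangement of $\bigO(n)$ horizontal and $\bigO(n)$ vertical segments, and conclude $\bigO(n^2)$ base rectangles. The only minor difference is the final counting step: you count faces of the arrangement via Euler's formula, whereas the paper injects each base rectangle into the set of intersection points by observing that its top-left corner is the crossing of one horizontal and one vertical segment, giving at most $(2n)\cdot(2n)=4n^2$ rectangles directly. The paper's injection is slightly more elementary and avoids the appeal to Euler's formula, but both are standard and equally valid.
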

\begin{proof}
As $|\AllVertices(\Polygon)| \leq 2n$,
we draw at most $2n$ horizontal and $2n$ vertical lines.
Each base rectangle is uniquely identified by its top left
vertex, which is formed by the intersection of a horizontal and
a vertical line.
Thus, there are at most $2n \cdot 2n = 4n^2$ base rectangles.
\end{proof}

\begin{lemma}\label{lem:compute_baserects}
Given a polygon $\Polygon$ with $2n$ edges, the base rectangles $\BaseRects(\Polygon)$ can be computed in $\bigO(n \log n + |\BaseRects(\Polygon)|)$ time. 
\end{lemma}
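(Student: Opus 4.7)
The plan is to build the planar arrangement whose line set consists of the polygon boundary together with all horizontal and vertical rays emanating from concave vertices; the bounded faces of this arrangement inside $\Polygon$ are exactly the base rectangles. Since every segment involved is axis-aligned, I can employ specialized sweep-line techniques that achieve output-sensitive runtimes.

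First, I would determine the actual endpoints of the rays. Since $|\AllVertices(\Polygon)| \le 2n$, there are at most $4n$ rays, and each one stops at the first polygon edge it meets, i.e.\ a ray-shooting query against the $2n$ axis-aligned edges of $\AllEdges(\Polygon)$. Sweeping a vertical line left to right while maintaining the active horizontal edges in a balanced BST keyed by $y$-coordinate, I can answer each predecessor/successor query raised at a concave vertex in $O(\log n)$, resolving all vertical rays in $O(n \log n)$ time; an analogous sweep handles the horizontal rays.

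Second, I would enumerate all pairwise intersections of the $O(n)$ axis-aligned segments (polygon edges and rays). A standard sweep-line for axis-aligned segments reports all $k$ such intersections in time $O(n \log n + k)$: advance a vertical sweep-line, insert/delete horizontal segments in a BST keyed by $y$ at their $x$-endpoints, and at each vertical segment's $x$-coordinate perform a range query over its $y$-interval. With a small amount of care the reported intersections already come out sorted along each segment (intersections on a horizontal segment are seen in $x$-order during the sweep; intersections on a vertical segment are reported in $y$-order by the BST range query), so no extra sorting step is required.

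Third, I would stitch the segments and their intersection points into a DCEL planar subdivision in $O(n + k)$ time and then traverse the bounded faces to read off $\BaseRects(\Polygon)$. A standard Euler-formula count on this arrangement gives $V = O(n) + k$ and $E = O(n) + 2k$, whence $F = O(n + k)$ and so $k = O(n + |\BaseRects(\Polygon)|)$. Combining the three steps yields the claimed total runtime $O(n \log n + |\BaseRects(\Polygon)|)$. The one real subtlety I anticipate is exactly the output-sensitivity of the intersection step: ensuring that no additive $O(k \log n)$ factor sneaks in from resorting the intersection points along each segment. Everything else follows from textbook plane-sweep methodology.
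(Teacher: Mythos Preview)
Your proposal is correct and follows the same high-level strategy as the paper: first determine the ray endpoints via ray shooting against the polygon boundary in $O(n\log n)$ time, then build the arrangement of polygon edges together with the rays in output-sensitive time. The paper's proof is terser on the second step, simply invoking the Chazelle--Edelsbrunner segment-arrangement algorithm to obtain the $O(n\log n + |\BaseRects(\Polygon)|)$ bound, and it asserts (rather than derives) that the arrangement has size $O(|\BaseRects(\Polygon)|)$. Your version is more elementary in two respects: you replace the general-position black box by a direct orthogonal sweep that reports all $k$ horizontal--vertical crossings in $O(n\log n + k)$ time with the crossings already sorted along each segment, and you supply the Euler-formula calculation showing $k = O(n + |\BaseRects(\Polygon)|)$. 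The payoff of your route is self-containment and the avoidance of a heavy citation; the payoff of the paper's route is brevity. The subtlety you flag---avoiding an extra $O(k\log n)$ from resorting crossings---is real, and your observation that the orthogonal sweep delivers them in order along both the horizontal and the vertical segments handles it correctly.
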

\begin{proof}
We first compute the arrangement, using the well-known sweep line algorithm, in $\bigO(n \log n)$ time.
Within the same time bound, we can build a vertical ray-shooting data structure for the arrangement, and one for horizontal ray-shooting.
Based on those, one can determine the vertical, and horizontal, extension ray that emits from a concave vertex in $\bigO(\log n)$ time, taking $\bigO(n \log n)$ total time.
Finally, we compute the $\bigO(|\BaseRects(\Polygon)|)$-sized arrangement, which includes the intersection points of the rays' line segments, in $\bigO(n \log n + |\BaseRects(\Polygon)|)$ time, using~\cite{ChazelleE92} for example.
\end{proof}

\begin{Definition}[Rectangle Powerset]
For a set of rectangles $\Rectangles$, the powerset $\RectPow(\Rectangles)$ is the set of all rectangles that can be covered by a subset from $\Rectangles$.
\end{Definition}
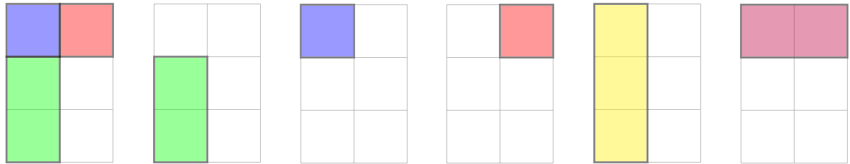
\begin{figure}[b]
\centering
    \begin{tikzpicture}[thick,fill opacity=.4,draw opacity=.4,scale=.7] %
        \draw[step=1cm,gray,very thin] (1, 0) grid (3, 3);
        \draw[fill=green] (1, 0) rectangle (2, 2);
        \draw[fill=blue] (1, 2) rectangle (2, 3);
        \draw[fill=red] (2, 2) rectangle (3, 3);
    \end{tikzpicture}
    \quad
    \begin{tikzpicture}[thick,fill opacity=.4,draw opacity=.4,scale=.7] %
        \draw[step=1cm,gray,very thin] (1, 0) grid (3, 3);
        \draw[fill=green] (1, 0) rectangle (2, 2);
    \end{tikzpicture}
    \quad
    \begin{tikzpicture}[thick,fill opacity=.4,draw opacity=.4,scale=.7] %
        \draw[step=1cm,gray,very thin] (1, 0) grid (3, 3);
        \draw[fill=blue] (1, 2) rectangle (2, 3);
    \end{tikzpicture}
    \quad
    \begin{tikzpicture}[thick,fill opacity=.4,draw opacity=.4,scale=.7] %
        \draw[step=1cm,gray,very thin] (1, 0) grid (3, 3);
        \draw[fill=red] (2, 2) rectangle (3, 3);
    \end{tikzpicture}
    \quad
    \begin{tikzpicture}[thick,fill opacity=.4,draw opacity=.4,scale=.7] %
        \draw[step=1cm,gray,very thin] (1, 0) grid (3, 3);
        \draw[fill=yellow] (1, 0) rectangle (2, 3);
    \end{tikzpicture}
    \quad
    \begin{tikzpicture}[thick,fill opacity=.4,draw opacity=.4,scale=.7] %
        \draw[step=1cm,gray,very thin] (1, 0) grid (3, 3);
        \draw[fill=purple] (1, 2) rectangle (3, 3);
    \end{tikzpicture}
\caption{A set of three interior-disjoint rectangles (far left) and all five rectangles in its rectangle powerset.}%
\label{fig:rect-powerset}
\end{figure}

\autoref{fig:rect-powerset} shows an example of a rectangle powerset that is generated by three rectangles.
Recall that for the unit-weight rectangle cover problem, it is well known that there is always an optimal solution that consists of maximal rectangles~\cite{DBLP:journals/siamdm/Franzblau89}, and that the set of maximal rectangles $\MaxRects(\Polygon)$ has size $\bigO(n^2)$ in the worst-case.
Observe that the powerset of the base rectangles $\RectPow(\BaseRect(\Polygon))$ always contains the maximal rectangles $\MaxRects(\Polygon)$.
Though our rectangle powersets can be much larger, their size remains polynomial.
\begin{lemma}\label{lem:rectpow_size}
Given a set of $r$ rectangles $\Rectangles$, $|\RectPow(\Rectangles)| \in \bigO(r^2)$.
\end{lemma}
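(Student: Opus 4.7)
My plan is to show that every $R \in \RectPow(\Rectangles)$ is uniquely pinned down by two pieces of data, each drawn from a set of size at most $r$: its top-left corner, which must coincide with $\TopLeft(R')$ for some $R' \in \Rectangles$, and its bottom-right corner, which must coincide with $\BottomRight(R'')$ for some $R'' \in \Rectangles$. Once this ``corner-matching'' property is established, the bound $|\RectPow(\Rectangles)| \leq r \cdot r \in \bigO(r^2)$ is immediate, since a rectangle is determined by these two corners.

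To prove the corner-matching claim, I would fix $R \in \RectPow(\Rectangles)$, write $\Interior(R) = \bigcup_{j=1}^{k} \Interior(R_{i_j})$ for some finite subset $\{R_{i_1}, \ldots, R_{i_k}\} \subseteq \Rectangles$, let $(a, b)$ denote the coordinates of $\TopLeft(R)$, and consider the sequence of points $p_n := (a + 1/n, b - 1/n)$ approaching the corner from inside $R$. For all sufficiently large $n$, $p_n \in \Interior(R)$, hence $p_n \in \Interior(R_{i_{j(n)}})$ for some index $j(n) \in \{1,\dots,k\}$. Finiteness of the subset together with the pigeonhole principle yields a fixed $j_0$ with $j(n) = j_0$ for infinitely many $n$; passing to the limit along that subsequence shows that the left $x$-coordinate of $R_{i_{j_0}}$ is at most $a$ and its top $y$-coordinate is at least $b$. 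The inclusion $\Interior(R_{i_{j_0}}) \subseteq \Interior(R)$ forces the reverse inequalities, so $\TopLeft(R_{i_{j_0}}) = \TopLeft(R)$. The symmetric argument along the opposite diagonal produces an index $j_1$ with $\BottomRight(R_{i_{j_1}}) = \BottomRight(R)$.

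The main obstacle is more conceptual than technical: I first need to pin down the intended reading of ``can be covered by a subset'' in the definition of $\RectPow(\Rectangles)$. The example in \autoref{fig:rect-powerset} makes clear that $R$ must be the \emph{exact} union of some subset of $\Rectangles$, rather than merely contained in such a union (under which reading $\RectPow(\Rectangles)$ would in general be infinite). Once this is settled, the only delicate step is the limit-and-pigeonhole passage, whose point is that finiteness of the covering subset forces the extremal coordinates of the corner of $R$ to be \emph{attained} by a single member of the subset rather than just approached asymptotically.
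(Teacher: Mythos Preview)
Your proposal is correct and follows essentially the same idea as the paper: both arguments pin down a rectangle in $\RectPow(\Rectangles)$ by its top-left and bottom-right corners and observe that each such corner must come from one of the $\bigO(r)$ input rectangles. The paper simply asserts the corner-matching claim without justification, whereas you supply a rigorous proof via the pigeonhole/limit argument; your version is thus a fleshed-out variant of the paper's one-line proof rather than a different approach.
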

\begin{proof}
Let $V$ be the set of all top left and bottom right vertices of rectangles in $\Rectangles$.
Then, $|V| \leq 4r$.
Each rectangle that can be covered by a subset of $\Rectangles$ must have its top left and bottom right vertex in $V$, so at most $\binom{4r}{2} \in \bigO(r^2)$ different rectangles can be covered by $\Rectangles$.
\end{proof}

\begin{corollary}\label{cor:rectpow_size}
Given a polygon $\Polygon$ with $2n$ edges, the rectangle powerset of its base rectangles contains %
at most
$|\RectPow(\BaseRects(\Polygon))| \in \bigO({|\RectPow(\BaseRects(\Polygon))|}^2) \subseteq \bigO(n^4)$ rectangles.
\end{corollary}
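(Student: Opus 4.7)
The plan is to chain together the two preceding results: \autoref{lem:num_baserects}, which bounds the number of base rectangles of a polygon by $\bigO(n^2)$, and \autoref{lem:rectpow_size}, which bounds the size of a rectangle powerset quadratically in the number of generating rectangles. Both bounds are already proven, so this corollary should reduce to a one-line substitution.

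Concretely, I would first apply \autoref{lem:rectpow_size} with the generating set $\Rectangles := \BaseRects(\Polygon)$ and $r := |\BaseRects(\Polygon)|$. This immediately yields the first containment $|\RectPow(\BaseRects(\Polygon))| \in \bigO(|\BaseRects(\Polygon)|^2)$ asserted in the statement. Then I would invoke \autoref{lem:num_baserects} to substitute $|\BaseRects(\Polygon)| \in \bigO(n^2)$, which squares to $\bigO(n^4)$, giving the second containment.

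I do not anticipate any genuine obstacle here, since the corollary is purely a composition of existing bounds and no new combinatorial argument about the polygon or its base rectangles is required. The one thing I would double-check before writing is that the statement as printed has a small typo — the inner $\RectPow$ on the right-hand side should read $\BaseRects$, so that the inclusion chain $\bigO(|\BaseRects(\Polygon)|^2) \subseteq \bigO(n^4)$ is type-correct — and I would phrase the proof so that this intended reading is unambiguous, writing the two inclusions as a short display and citing \autoref{lem:rectpow_size} and \autoref{lem:num_baserects} explicitly for the respective steps.
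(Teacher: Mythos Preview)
Your proposal is correct and matches the paper's intent: the corollary is stated without proof precisely because it follows immediately by composing \autoref{lem:rectpow_size} with \autoref{lem:num_baserects}, exactly as you describe. Your observation about the typo (the inner $\RectPow$ should be $\BaseRects$) is also accurate.
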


We now show that it suffices to only consider base rectangles as candidates for
an optimal cover in the \emph{weighted} problem setting, which also constitutes the main result of this section.
The following result is well-known for the unit-weight case, but its correctness
depends on the cost function for the general case.
We show that it holds for our cost functions and use it as building block
for our main result in Theorem~\autoref{thm:base-aligned}.
(Similar proofs~\cite{Zachariasen01, SeyboldF14} are known for other problems.)

\begin{lemma}[Grid-aligned]\label{lem:grid-aligned}
    For every polygon $\Polygon$ and reals $\alpha, \beta \geq 0$, there
    exists an optimal weighted rectangle cover $\Cover$ such that $\Cover \subseteq
\RectPow(\GridRects(\Polygon))$, where $\GridRects(\Polygon)$ denotes the set of grid cells defined by the vertices of $\Polygon$.
\end{lemma}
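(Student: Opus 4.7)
My plan is a continuous exchange argument: starting from any optimal cover $\Cover$, I iteratively shift each non-grid-aligned coordinate of the rectangles in $\Cover$ to a grid-aligned value without increasing the cost. By the symmetry between the two axes, it suffices to handle $y$-coordinates.

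First I would fix a $y$-coordinate $d \notin G_y$ (the set of grid $y$-coordinates) that appears as the top or bottom of some rectangle in $\Cover$, and let $y^- < d < y^+$ be the adjacent grid $y$-coordinates. I partition the cover as
\[
\mathcal{T}_d = \{R \in \Cover : d_R = d\}, \quad
\mathcal{B}_d = \{R \in \Cover : c_R = d\}, \quad
\mathcal{C}_d = \{R \in \Cover : c_R < d < d_R\},
\]
and consider the one-parameter perturbation that moves every edge at $y = d$ to $y = d + \lambda$. The argument then rests on three observations.

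\emph{Coverage.} The open slab $\R \times (y^-, y^+)$ contains no polygon edges, so $\Polygon$ is cylindrical within it; a short vertical sweep then shows that for any $(x, y) \in \Interior(\Polygon)$ with $y$ sufficiently close to $d$ one also has $(x, d) \in \Interior(\Polygon)$. Since the interiors of the rectangles in $\mathcal{T}_d \cup \mathcal{B}_d$ do not meet the level $y = d$, the point $(x, d)$ must be covered by some $R \in \mathcal{C}_d$ with $x \in (a_R, b_R)$; because $R$ strictly encloses $d$ in its $y$-range, it continues to cover $(x, y)$ even after the perturbation. \emph{Validity.} Each perturbed rectangle stays inside $\Polygon$: the same cylindrical structure gives $[a_R, b_R] \times [c_R, d + \lambda] \subseteq \Polygon$ for every $R \in \mathcal{T}_d$ and every $d + \lambda \in (y^-, y^+)$, and symmetrically for $\mathcal{B}_d$. \emph{Cost.} The total cost is linear in $\lambda$, namely $\Delta C(\lambda) = \beta\,\lambda\,(W_{\mathcal{T}_d} - W_{\mathcal{B}_d})$ with $W_{\cdot} = \sum_R (b_R - a_R)$; optimality of $\Cover$ then forces $\beta(W_{\mathcal{T}_d} - W_{\mathcal{B}_d}) = 0$, so the perturbation is cost-neutral in both directions.

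Combining the three items, I can push $d$ a positive distance toward either $y^-$ or $y^+$ without affecting cost or cover validity. The hard part will be iterating this shift when several non-grid coordinates coexist inside $(y^-, y^+)$: the moving value $d + \lambda$ may first collide with another non-grid coordinate $t'$, or a rectangle may collapse to zero height. I would handle this by induction on the total number of non-grid coordinates used by $\Cover$: each maximal shift either snaps a non-grid coordinate to a grid one or merges two non-grid coordinates into a single value, and in both cases the count strictly decreases. Rectangles that become degenerate are dropped from the cover, and the \emph{Coverage} observation guarantees that their former interiors remain covered by unaffected members of $\mathcal{C}_d$. Performing the symmetric argument on $x$-coordinates then produces an optimal cover in which every rectangle has all four corners at grid vertices, i.e., a cover contained in $\RectPow(\GridRects(\Polygon))$.
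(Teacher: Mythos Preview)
Your proposal is correct and follows essentially the same local-perturbation argument as the paper: fix a non-grid coordinate, simultaneously shift all rectangle edges lying on it, and use the cylindrical structure of $\Polygon$ in the open slab between adjacent grid lines to guarantee both validity and coverage; then iterate (you by induction on the number of non-grid coordinates, the paper via a sweep) and repeat for the other axis.

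The one substantive difference is how you choose the direction of the shift. You invoke first-order optimality of $\Cover$ to conclude $\beta(W_{\mathcal{T}_d}-W_{\mathcal{B}_d})=0$, so the move is cost-neutral in both directions; the paper instead just observes that the linear cost change $\beta\,\delta\,(h^--h^+)$ is non-positive in at least one direction and moves that way. The paper's version is slightly more robust: it applies to an \emph{arbitrary} starting cover and produces a grid-aligned cover of no larger cost, which then specialises to the optimal case. It also sidesteps a small wrinkle in your write-up---if a rectangle were actually to collapse when $\alpha>0$, dropping it would strictly lower the cost below the assumed optimum, so you implicitly need that this never happens---by simply allowing cost to weakly decrease throughout the process.
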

\begin{proof}
Let $\Pi_x( Q )$ be the set of (distinct) $x$-coordinates of a set of points $Q$, and $\Pi_y(Q)$ its set of $y$-coordinates.
Recall that the vertices of the Hanan-grid $\GridRects(\Polygon)$ are given by the Cartesian product $\Pi_x( \AllVertices(\Polygon))\times \Pi_y( \AllVertices(\Polygon) )$.

Using a sweep argument with a vertical line, we will show that, for any cover $\Cover''$, there exists a cover $\Cover'$ with $\Pi_x(\AllVertices(\Cover')) \subseteq \Pi_x(\AllVertices(\Polygon))$ and cost \emph{no larger} than~$\Cover''$.
Using the symmetric sweep argument with a horizontal line on $\Cover'$ shows the lemma.
Note that this argument applies in particular to an optimal cover $\Cover''$.

The sweep argument proceeds with ascending {$x$-co}ordinates and maintains a cover $\Cover'$ that remains feasible at all times, i.e. at all $x$-events.
Initially, we set $\Cover':=\Cover''$ and maintain the sweep-invariant that all vertices of $\Cover'$ that are to the left of the sweep line are contained in $\Pi_x(\AllVertices(\Polygon))$.
It remains to specify how to update $\Cover'$ at the $x$-events, i.e. for all $x$-values in $\Pi_x(\AllVertices(\Cover'')) \setminus \Pi_x(\AllVertices(\Polygon))$.

For a given $x$-event, consider all rectangles in the current cover $\Cover'$ that have a vertical edge that is contained in the sweep line.
Let $h^-$ denote the total (summed) height of all incident rectangles that are to the left of the sweep line,
and $h^+$        the total          height of all incident rectangles that are to its right.
Now, jointly moving all vertical sides of those rectangles by a small $\delta \geq 0$, i.e. to the right, or by a small $\delta<0$, i.e. to the left, changes the cost of the rectangle set $\Cover'$ exactly by 
\begin{equation}\label{eq:hanan-sweep}
\beta ~\cdot~ \delta (h^- - h^+)~.
\end{equation}
Thus, there is one direction that does not increase cost.
Note that this is true for all $\delta \in [x^- - x, x^+-x]$, where 
$|x^--x|$ is the distance of the sweep line to its predecessor value in $\Pi_x(\AllVertices(\Polygon))$
and $x^+-x$ is the distance of the line to its successor value in $\Pi_x(\AllVertices(\Cover'')) \cup \Pi_x(\AllVertices(\Polygon))$.
Clearly, deleting all rectangles $R$ from $C'$ that become degenerate by moving, i.e. $\Area(R)=0$, does not increase cost, since $\alpha\geq 0$.
It remains to show that the set of rectangles $\Cover'$ remains a valid cover of the polygon interior for those values of $\delta$.

To see that the cover remains feasible when moving the boundaries to the right ($\delta\geq 0$), consider an arbitrary incident rectangle $R$ that lies to the right of $x$. ($R$ is shortened by the move.)
If its left boundary $\mathit{left}(R)$ is covered by rectangles that are to the left of the sweep line and incident to it, then there is nothing to show.
If there is a point $q \in \mathit{left}(R)$ that is not covered by a rectangle that is to the left of the sweep line and incident to it, then we have that $q$ is in the interior of the polygon.
This is due to the $x$-events being from $\Pi_x(\AllVertices(\Cover'')) \setminus \Pi_x(\AllVertices(\Polygon))$ only, i.e., those $x$-coordinates of $\Cover''$ that are different from those of the polygon.
Let $q'\in \Interior(\Polygon)$ have the same $y$-coordinate as $q$, but an infinitesimally smaller $x$-coordinate.
Since the initial set $\Cover''$ is a cover of the polygon's interior $\Interior(\Polygon)$, there is a rectangle $R' \in \Cover'$ that is \emph{not incident} to the sweep line and contains $q' \in \Interior(R')$. 
Thus, $q \in \Interior(R')$ and remains covered.
The argument for moving the vertical boundaries to the left ($\delta <0$) is analog.

Using the symmetric sweep argument with a horizontal line on $\Cover'$ shows the lemma.
\end{proof}

\begin{theorem}[Base-Aligned]\label{thm:base-aligned}
For every polygon $\Polygon$ and reals $\alpha, \beta \geq 0$, there
exists an optimal weighted rectangle cover $\Cover$ such that $\Cover \subseteq
\RectPow(\BaseRects(\Polygon))$.
\end{theorem}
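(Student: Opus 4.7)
My plan is to build on Lemma~\ref{lem:grid-aligned} and run a second, analogous sweep argument that refines the alignment from grid rectangles to base rectangles. First, I invoke Lemma~\ref{lem:grid-aligned} to obtain an optimal cover $\Cover'' \subseteq \RectPow(\GridRects(\Polygon))$. Since $\GridRects(\Polygon)$ refines $\BaseRects(\Polygon)$, a rectangle $R \in \Cover''$ fails to lie in $\RectPow(\BaseRects(\Polygon))$ precisely when some base rectangle $B = [a,b]\times[c,d]$ satisfies both $R \cap \Interior(B) \neq \emptyset$ and $\Interior(B) \not\subseteq R$; equivalently, one of $R$'s sides sits on a Hanan line that passes through the interior of $B$. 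It therefore suffices to transform $\Cover''$, without increasing cost, into a cover whose rectangles exhibit no such crossings.

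Next, I rerun the vertical sweep from the proof of Lemma~\ref{lem:grid-aligned}, but now the $x$-events are Hanan coordinates $x_0 \in (a,b)$ lying in the interior of some base rectangle $B = [a,b] \times [c,d]$. For each such event, I jointly shift every vertical edge of $\Cover''$ positioned at $x = x_0$ whose $y$-range is contained in $[c,d]$ by a small $\delta \in [a - x_0,\, b - x_0]$. The same computation as in~\eqref{eq:hanan-sweep} shows that the total cost changes by exactly $\beta \cdot \delta (h^- - h^+)$, so at least one direction of $\delta$ does not increase cost, and any rectangle that degenerates to zero area may then be removed for an additional saving of $\alpha$ (since $\alpha \geq 0$). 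Feasibility is preserved by the same local argument as in Lemma~\ref{lem:grid-aligned}: every point $q$ on the swept edge lies in $\Interior(B) \subseteq \Interior(\Polygon)$, and because $\Cover''$ is a cover of $\Polygon$, the infinitesimally nearby point $q'$ on the opposite side of the sweep line is contained in some non-incident rectangle of $\Cover''$, so $q$ remains covered after the shift.

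I expect the main obstacle to be the case in which a single rectangle of $\Cover''$ has a vertical edge that crosses several base rectangles in the vertical direction, so that the restriction ``$y$-range contained in $[c,d]$'' would exclude that edge from the sweep. I would handle this by first running the symmetric horizontal sweep, which only moves horizontal edges and leaves all $x$-coordinates of $\Cover''$ intact, forcing every cover rectangle's horizontal edges to lie on base-partition-aligned $y$-coordinates; this in turn guarantees that each cover rectangle's $y$-range lies within the vertical extent of a single row of base rectangles, after which the vertical sweep described above applies cleanly. Iterating both directional sweeps until no Hanan coordinate interior to any base rectangle is still used as a cover edge yields the claimed optimal cover $\Cover \subseteq \RectPow(\BaseRects(\Polygon))$.
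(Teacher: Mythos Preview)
Your overall strategy matches the paper's: start from the grid-aligned optimum of Lemma~\ref{lem:grid-aligned} and run a second sweep to push rectangle sides onto base-partition lines. The gap is in how you localise the sweep. You index $x$-events by individual base rectangles $B=[a,b]\times[c,d]$ and move only those vertical edges whose $y$-range sits inside $[c,d]$; you correctly flag that a tall cover rectangle may have its vertical edge at $x_0$ spanning several stacked base rectangles and hence be excluded. But the proposed fix does not close this gap: running the horizontal sweep first aligns the \emph{top and bottom} of each cover rectangle to base-partition $y$-coordinates, yet it does not force the rectangle's $y$-extent to lie within a single base rectangle's height. Any rectangle that legitimately covers several base rectangles stacked vertically (e.g.\ the full left column of an L-shaped polygon) still spans multiple ``rows'' after horizontal alignment, so your restricted vertical sweep will still skip its edge. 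By symmetry the horizontal sweep suffers the identical obstacle, so alternating the two gives no guarantee of progress.

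The paper avoids this by working per \emph{chord} rather than per base rectangle: at a grid $x$-coordinate it considers each maximal vertical segment of the sweep line inside $\Interior(\Polygon)$ and moves \emph{all} rectangle edges lying on that chord together. The crucial observation is that if the chord contains no concave vertex then neither of its endpoints is a polygon vertex, so the chord at the adjacent grid $x$-coordinate has the identical $y$-range; hence shifting every incident edge to that neighbour keeps all rectangles inside $\Interior(\Polygon)$, and the cost argument~\eqref{eq:hanan-sweep} still applies verbatim. Replacing your per-$B$ restriction with this per-chord move (and the analogous horizontal version) turns your outline into a complete proof.
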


\begin{proof}
From \autoref{lem:grid-aligned}, we have that there is an optimal solution $\Cover''$ whose rectangles consist of vertices from the Hanan-grid $\GridRects(\Polygon)$.

Using a sweep argument with a vertical line, we will show that, for any such cover $\Cover''$, there exists a cover $\Cover'$ with, cost \emph{no larger} than~$\Cover''$, whose rectangles have vertical sides that are either contained in the boundary of $\Polygon$ or in the ray of a concave vertex.
Using the symmetric sweep argument with a horizontal line shows the theorem.

As in the previous proof, our invariant will be that all vertical boundaries that are to the left of the sweep line have this property.
The $x$-events of the sweep line are however from the grid $\Pi_x(\AllVertices(\Polygon))$.

For a given $x$-event, consider the vertical chords through the interior $\Interior(\Polygon)$ with the given $x$-coordinate (i.e. the intersection of the sweepline and $\Interior(\Polygon)$).
Note that chords are either separated by a section of the boundary or by the exterior of polygon $\Polygon$.
For a given chord, consider all rectangles of $\Cover'$ that have a vertical boundary contained in the chord.
If the chord contains a concave vertex of the boundary of $\Polygon$, the incident boundaries of the rectangles are already as desired.
For the case that the chord does not contain a concave vertex, we observe that this is only possible if the chord contains no vertex of the polygon $\AllVertices(\Polygon)$ at its top, or at its bottom.
Thus, all rectangles whose vertical sides are incident with the chord can be moved to the predecessor or successor value in $\Pi_x(\AllVertices(\Polygon))$ without increasing the cost.
The change in cost of $\Cover'$ for moving, to the left or right, is again given by \autoref{eq:hanan-sweep}.
Thus, moving in one of the directions has non-increasing cost.
Moreover, since the chord at the predecessor $x$-coordinate has the identical $y$-range,
all vertical boundaries that are incident in the chord may be moved to the left until there is a concave vertex on the chord.
We apply this argument separately to each of the vertical chords that are contained in the sweep line. %
After all $x$-events are resolved, we have obtained a cover $\Cover'$ with the stated property for vertical boundaries.
\end{proof}

\begin{lemma}\label{lem:min_baserect}
Let $A_{\min}$ be the minimum area of any base rectangle of a given polygon $\Polygon$.
Then, a partition is an optimal cover of $\Polygon$ with cost function $\Cost_{\alpha,\beta}$
for $\alpha, \beta \geq 0$ if $\beta \cdot A_{\min} \geq \alpha$.
\end{lemma}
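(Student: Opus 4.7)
The plan is to start from an optimal cover $\Cover^*$ that is base-aligned---such a cover exists by \autoref{thm:base-aligned}---and to iteratively ``uncross'' pairs of overlapping rectangles until the cover becomes a partition, showing that each individual swap leaves the cost no larger than before.

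Suppose $\Cover^*$ is not already a partition; then two of its rectangles $\Rect_1, \Rect_2$ have intersecting interiors. Because both lie in $\RectPow(\BaseRects(\Polygon))$, the intersection $\Rect_1 \cap \Rect_2$ is itself a nonempty base-aligned rectangle and therefore contains at least one base rectangle, giving $\Area(\Rect_1 \cap \Rect_2) \ge A_{\min}$. A short case analysis on how two axis-aligned rectangles can overlap (one contained in the other; L- or T-shape where a full side of the intersection lies on a side of one of the rectangles; diagonal ``staircase''; ``plus'' configuration) shows that the rectilinear region $\Rect_1 \cup \Rect_2$ has at most four reflex vertices and always admits a partition into $k \le 3$ pairwise interior-disjoint rectangles $T_1, \dots, T_k$; all vertices of the $T_i$ are coordinates of $\Rect_1$ or $\Rect_2$, so the $T_i$ are again base-aligned.

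Replacing $\Rect_1, \Rect_2$ in $\Cover^*$ by $T_1, \dots, T_k$ yields another cover $\Cover'$ of $\Polygon$. Using $\sum_i \Area(T_i) = \Area(\Rect_1 \cup \Rect_2) = \Area(\Rect_1) + \Area(\Rect_2) - \Area(\Rect_1 \cap \Rect_2)$, the cost change is
\[
\sum_{\Rect \in \Cover'} \Cost_{\alpha,\beta}(\Rect) - \sum_{\Rect \in \Cover^*} \Cost_{\alpha,\beta}(\Rect) = (k-2)\alpha - \beta\,\Area(\Rect_1 \cap \Rect_2) \le \alpha - \beta A_{\min} \le 0,
\]
so $\Cover'$ is optimal as well.

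Each such swap strictly decreases the total covered area $\sum_{\Rect \in \Cover} \Area(\Rect)$ by $\Area(\Rect_1 \cap \Rect_2) > 0$, while preserving both optimality and base-alignment; since only finitely many base-aligned covers exist, the iteration terminates in an overlap-free optimal cover, which by definition is a partition. The main obstacle is the geometric case analysis in the second paragraph: one has to verify that in \emph{every} way two axis-aligned rectangles can meet, their union decomposes into at most three rectangles whose vertices are inherited from $\Rect_1$ and $\Rect_2$. This bound $k \le 3$ is tight---for the plus and the staircase configurations one genuinely needs three pieces---and it is precisely this tightness that matches the sufficient condition $\beta A_{\min} \ge \alpha$.
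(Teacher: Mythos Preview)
Your argument is correct and follows the same high-level scheme as the paper's proof (start from an optimal base-aligned cover via \autoref{thm:base-aligned}, then iteratively eliminate overlaps without increasing cost), but the local step differs. The paper first notes that in an optimal cover no edge of a rectangle can be fully contained in another rectangle, since otherwise one could prune or trim; this rules out containment and T/L-type overlaps, leaving only the staircase and plus configurations. In those two cases the overlap can be removed by subdividing \emph{one} of the two rectangles into exactly two pieces, so each step creates precisely one new rectangle and changes the cost by $\alpha - \beta\,\Area(\Rect_1\cap\Rect_2)\le \alpha-\beta A_{\min}\le 0$.

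You bypass that preliminary optimality observation and instead partition $\Rect_1\cup\Rect_2$ directly into $k\le 3$ base-aligned pieces for \emph{all} overlap configurations, obtaining $(k-2)\alpha - \beta\,\Area(\Rect_1\cap\Rect_2)\le \alpha-\beta A_{\min}$. This is more self-contained (no appeal to prune/trim) at the price of an explicit geometric case analysis, while the paper's route is terser but relies on the reader verifying that after the edge-containment reduction only staircase and plus remain. Both arrive at the same bound, and your termination argument via the strictly decreasing total area (each drop at least $A_{\min}$, with all pieces remaining in the finite set $\RectPow(\BaseRects(\Polygon))$) is sound.
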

\begin{proof}
By Theorem~\autoref{thm:base-aligned}, $\Polygon$ has an optimal cover $\Cover$ for cost function $\Cost_{\alpha,\beta}$
such that $\Cover \subseteq \RectPow(\BaseRects(\Polygon))$.
As $\Cover$ is optimal, no edge of a rectangle $\Rect \in \Cover$ is fully contained in another rectangle
$\Rect'$, otherwise, $\Rect$ could be removed from $\Cover$ or its area reduced to avoid an overlap
with $\Rect'$ entirely (see also \autoref{sect:postproc}).

If $\Cover$ is not a partition, there are at least two rectangles in $\Cover$ with an overlap of at least $A_{\min}$.
Removing that overlap by subdividing one of the two rectangles creates one new
rectangle and thus changes the cost of the cover by $\alpha - \beta \cdot A_{\min} \leq 0$.
Repeating this step until no overlap remains results in a partition $\Cover'$ of cost no larger than that of $\Cover$.
\end{proof}

To simplify the description of our algorithms, we introduce the \emph{base
rectangle graph} $\BaseRectGraph(\Polygon)$ of a polygon $\Polygon$, where
each node $\BaseRect$ is a base rectangle and has up
to four labelled neighbors \emph{left}~$\ell(\BaseRect)$, \emph{top}~$t(\BaseRect)$,
\emph{right}~$r(\BaseRect)$, \emph{bottom}~$b(\BaseRect)$, which are the base rectangles that
$\BaseRect$ shares its left, top, right, or bottom edge with, respectively.
We say that a path in $\BaseRectGraph(\Polygon)$ is \emph{unidirectional} if
it starts at some node and continues only via either left, or top, or right, or bottom neighbors.

\begin{lemma}\label{lem:compute_baserectgraph}
Given a polygon $\Polygon$ with $2n$ edges, the base rectangle graph $\BaseRectGraph(\Polygon)$ can be constructed in 
$\bigO(n \log n + |\BaseRects(\Polygon)|)$ 
time and has $\bigO(|\BaseRects(\Polygon)|)$ nodes.
There are at most $\bigO(n)$ nodes that have less than four neighbors and each unidirectional path has length $\bigO(n)$.
\end{lemma}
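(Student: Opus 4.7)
The plan is to derive all four statements from the construction of the arrangement that underlies \autoref{lem:compute_baserects}, together with a counting argument over the rays emanating from concave vertices.

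\textbf{Construction and node count.} First I would invoke \autoref{lem:compute_baserects} to obtain the set $\BaseRects(\Polygon)$ together with its planar subdivision in $\bigO(n\log n + |\BaseRects(\Polygon)|)$ time. The nodes of $\BaseRectGraph(\Polygon)$ are in bijection with $\BaseRects(\Polygon)$, so the node count is immediate. To extract the four labelled adjacencies, I would walk the boundary of each face in the arrangement, identifying for each edge whether it lies on the boundary of $\Polygon$ (no neighbor on that side) or on a concave-vertex ray (exactly one neighbor on that side, namely the opposite face). Since every edge in the arrangement is traversed a constant number of times and the arrangement has $\bigO(|\BaseRects(\Polygon)|)$ edges, this preserves the stated time bound.

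\textbf{Nodes with fewer than four neighbors.} A base rectangle $\BaseRect$ has fewer than four neighbors iff at least one of its four sides lies on $\AllEdges(\Polygon)$ rather than on a ray. The key counting step is to bound how often this can happen. There are at most $|\AllVertices(\Polygon)| \le 2n$ concave vertices and hence $\bigO(n)$ horizontal and $\bigO(n)$ vertical rays. Each ray endpoint splits at most one boundary edge of $\Polygon$, and the original boundary contains $\bigO(n)$ horizontal and $\bigO(n)$ vertical edges; together these yield $\bigO(n)$ boundary segments in the arrangement. Each such boundary segment is incident to exactly one base rectangle on its interior side, and each base rectangle has at most four sides, so the number of base rectangles with at least one boundary-side is $\bigO(n)$. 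This is the set of nodes with fewer than four neighbors.

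\textbf{Length of unidirectional paths.} Consider a path using only right-neighbors; the other three cases are symmetric. At each step, the $x$-coordinate of the left side of the current base rectangle strictly increases, and by construction these $x$-coordinates all come from the finite set $\Pi_x(\AllVertices(\Polygon)) \cup \{x\text{-coords of vertical rays}\}$. Since every vertical ray originates at a vertex of $\Polygon$, this set has size $\bigO(n)$, so any such path visits at most $\bigO(n)$ nodes. The main subtlety — and the one place where a careful argument is needed — is to verify that horizontally stacked base rectangles indeed have left-edges whose $x$-coordinates lie in $\Pi_x(\AllVertices(\Polygon))$; this follows because every vertical separator in $\BaseRects(\Polygon)$ is either a side of $\Polygon$ or a vertical ray from a concave vertex, both of which are anchored at vertices of $\Polygon$.
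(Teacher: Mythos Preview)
Your proof is correct and follows essentially the same approach as the paper: invoke \autoref{lem:compute_baserects} for the construction, bound the nodes with fewer than four neighbors by counting boundary segments of $\AllEdges(\Polygon)$ after subdivision by the $\bigO(n)$ ray endpoints, and bound unidirectional path length via the $\bigO(n)$ distinct $x$-coordinates coming from $\AllVertices(\Polygon)$. Your write-up is in fact more detailed than the paper's (which is very terse and handles the last two claims in two sentences), but the underlying ideas are identical.
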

\begin{proof}
The construction and the number of nodes is directly implied by
\autoref{lem:compute_baserects}.
Each node that does not have four neighbors must be a base rectangle where at least one
of its sides is a segment of an edge in $\AllEdges(\Polygon)$.
By construction (cf.~proof of \autoref{lem:num_baserects}), each vertex can cause at most
one subdivision of a horizontal and one subdivision of a vertical edge.
As there are $\leq 2n$ vertices, the total number of horizontal and vertical
edge segments is $\leq 2n$ each.
The same argument implies that each unidirectional path has length $\leq 2n$.
\end{proof}

\begin{lemma}\label{lem:enum_rectpow}
Given the base rectangle graph $\BaseRectGraph(\Polygon)$, the powerset $\RectPow(\BaseRects(\Polygon))$ can be
reported in $\bigO(|\RectPow(\BaseRects(\Polygon))|)$ time.
\end{lemma}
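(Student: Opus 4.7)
The plan is to enumerate every $R \in \RectPow(\BaseRects(\Polygon))$ exactly once, anchored at the unique base rectangle $B$ whose top-left corner coincides with that of $R$. Since every base rectangle is itself in the powerset, $|\BaseRects(\Polygon)| \leq |\RectPow(\BaseRects(\Polygon))|$, so the outer iteration over anchors sits within the claimed bound, and the central task will be to list the rectangles anchored at a fixed $B$ in time proportional to their number.

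For a fixed anchor $B$, I would generate candidate right $x$-coordinates $x_1 < x_2 < \cdots$ by a rightward unidirectional walk in $\BaseRectGraph(\Polygon)$: follow right-neighbour pointers from $B$ while the top edges traversed lie on the horizontal arrangement segment through $\mathrm{top}(B)$. The candidate bottom $y$-coordinates $y_1 > y_2 > \cdots$ come from a symmetric downward walk. Each walk has length $O(n)$ by Lemma~\autoref{lem:compute_baserectgraph}. A pair $(x_i, y_j)$ yields a rectangle in the powerset iff the horizontal arrangement segment at $y_j$ extends from $\mathrm{left}(B)$ past $x_i$ and the vertical arrangement segment at $x_i$ extends from $\mathrm{top}(B)$ past $y_j$; both checks cost $O(1)$ using extent annotations on the nodes of $\BaseRectGraph(\Polygon)$.

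The structural observation that makes the enumeration output-sensitive is that the set of valid pairs forms a downward-closed staircase in the index grid: if $(x_i, y_j)$ is valid, then every $(x_{i'}, y_{j'})$ with $i' \leq i$ and $j' \leq j$ is valid as well, because the two arrangement-extent predicates restricted to a top-left sub-rectangle become conditions on sub-segments of already-validated arrangement edges. I would therefore enumerate by a two-pointer sweep: walk $i = 1, 2, \ldots$ while maintaining the non-increasing frontier $j^*_i = \max\{\,j : (x_i, y_j) \text{ valid}\,\}$, report the rectangles indexed by $(x_i, y_1), \ldots, (x_i, y_{j^*_i})$, and terminate once $j^*_i = 0$. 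Each constant-time step of the sweep either emits an output rectangle or advances one of two monotone pointers along a walk of length $O(n)$, giving work $O(1 + m_B)$ per anchor, where $m_B$ is the number of rectangles anchored at $B$; summing yields the claimed $O(|\RectPow(\BaseRects(\Polygon))|)$ bound.

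The main obstacle will be establishing the staircase property rigorously: one must verify that if a rectangle $R$ is in the powerset, then any vertical arrangement line contributed by a top-row base rectangle of $R$'s tiling extends across the full height of $R$, so that shrinking $R$ to a smaller top-left sub-rectangle keeps the new right edge on an arrangement edge (and symmetrically for the bottom edge). This should follow from the fact that base rectangles arise as the cells of the arrangement induced by rays from concave vertices, which forces any column boundary appearing in the top row of $R$'s tiling to persist downward throughout $R$, together with the observation that the walks only produce candidate coordinates lying on exactly such arrangement lines.
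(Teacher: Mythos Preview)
Your approach is essentially the same as the paper's: anchor each rectangle at the unique base rectangle sharing its top-left corner and enumerate via a staircase sweep. The paper implements this more procedurally by walking $\BaseRectGraph$ column by column: from $B$ it descends via bottom neighbours, reporting at each step, then moves to $r(B)$, $r(r(B))$, \ldots, each time descending only as far as a running $\min_y$ bound and tightening $\min_y$ whenever a column terminates early (no bottom neighbour). That $\min_y$ bound is exactly your non-increasing frontier $j^*_i$, so the two algorithms coincide; your write-up is simply more explicit about the underlying staircase structure and the $O(1+m_B)$ accounting.

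One caution about your validity test. As stated, ``the horizontal arrangement segment at $y_j$ extends past $x_i$ and the vertical arrangement segment at $x_i$ extends past $y_j$'' is not an \emph{iff} in the presence of holes. Take a square with a square hole: each hole corner is concave, and the rays together with the hole edges produce a contiguous vertical arrangement line at the hole's right $x$-coordinate and a contiguous horizontal line at its bottom $y$-coordinate. Your two extent predicates then pass for a rectangle that strictly contains the hole, which is \emph{not} a union of base rectangles. The paper avoids this by using bottom-neighbour existence in $\BaseRectGraph$ as the stopping criterion, which correctly detects the obstruction because the base rectangle above the hole has no bottom neighbour. If your ``extent annotations'' are meant to record row/column reach \emph{within} $\BaseRectGraph$ (i.e., via neighbour links) rather than raw arrangement-segment length, the issue disappears; but as phrased, the check is too weak.
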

\begin{proof}
We first compute the base rectangle graph $\BaseRectGraph(\Polygon)$ %
using \autoref{lem:compute_baserectgraph}.

For each node $\BaseRect$ of $\BaseRectGraph$, we enumerate all rectangles $\Rect$ that have
$\TopLeft(\Rect) = \TopLeft(\BaseRect)$ by starting with $b(\BaseRect)$ and first
moving to the respective bottom neighbor in each step for as long as possible.
In each step, report the rectangle with top left vertex $\TopLeft(\BaseRect)$
and bottom right vertex equal to the bottom right vertex of the current base
rectangle.
Let $\min_y$ be the $y$-coordinate of the bottom right vertex of the last base rectangle.
Repeat the same process starting with $r(\BaseRect)$, until either a base rectangle whose bottom right
vertex has $y$-coordinate $\min_y$ is reached or the current base rectangle has no bottom neighbor.
In the second case, update $\min_y$ to the $y$-coordinate of this last base rectangle.
Repeat the process with $r(r(\BaseRect))$, $r(r(r(\BaseRect)))$ and so on until a node without a right neighbor is reached.
\end{proof}

\section{Algorithms}\label{sect:algorithms}%
In this section, we present and analyze three heuristic base algorithms and
six postprocessing routines to solve the \WRC{} problem quickly in
practice.
In addition, we give an exact approach in form of a binary linear
program, which we also use to evaluate the solution qualities in our
experimental evaluation.
\subsection{Binary LP (BIP)}\label{sect:bip} %
Our binary linear program is based on the formulation used for rectilinear
picture compression~\cite{Heinrich-Litan_Lubbecke_2007}, which essentially
solves the unweighted rectangle cover problem (cf.~\autoref{sect:related}).
Recall that a crucial difference between the unweighted and the weighted
problem is that we cannot restrict ourselves to consider only maximal rectangles
as candidates for the cover.
Our formulation is based on base rectangles and reads as follows:
\begin{alignat}{2}
\min{}\sum_{\Rect ~ \in~ \RectPow(\BaseRects(\Polygon))} & x_{\Rect} \cdot \Cost_{\alpha,\beta}(\Rect) \label{eq:objective} \\
\textrm{s.\,t.} \sum_{\Rect ~ \in~  \RectPow(\BaseRects(\Polygon))_{\supseteq \BaseRect}} & x_{\Rect} \geq  1 && \forall \BaseRect \in \BaseRects(\Polygon) \label{eq:constraints} \\
    & x_{\Rect} \in \{0,1\} \qquad && \forall \Rect \in \RectPow(\BaseRects(\Polygon))
\end{alignat}
We have one binary variable $x_{\Rect}$ per element of
$\RectPow(\BaseRects(\Polygon))$ and one constraint per base rectangle, where
the constraints in~\eqref{eq:constraints} ensure that every base rectangle is
covered by at least one selected rectangle.
By \autoref{cor:rectpow_size} and \autoref{lem:num_baserects}, we thus have $\bigO(n^4)$ variables
and $\bigO(n^2)$ constraints.

Let $\MaxRects(\Polygon)$ be the set of maximal rectangles of $\Polygon$.
To obtain a formulation for the unweighted rectangle cover problem, it suffices
to only have a binary variable for each rectangle in $\mathcal{M}(\Polygon)$.
In addition, we set a unit cost function in~\eqref{eq:objective}
and replace the constraints~\eqref{eq:constraints} by 
\begin{alignat}{2}
\tag{\ref*{eq:constraints}$'$}
\sum_{M \in {\MaxRects(\Polygon)}_{\supseteq \BaseRect}} & x_{\Rect} \geq  1 &&\qquad \forall \BaseRect \in \BaseRects(\Polygon)
\end{alignat}
Besides giving a formulation for polygons where not all vertices are integral,
the number of constraints also remains $\bigO(n^2)$ and thus only depends polynomially
on the size of the polygon (\ie, the number of edges), as opposed to its area
in the original formulation~\cite{Heinrich-Litan_Lubbecke_2007}.

Note that using $\MaxRects(\Polygon)$ instead of $\RectPow(\BaseRects(\Polygon))$ for
the weighted problem leads to incorrect solutions (e.g.~\autoref{fig:area-objective}).

\subsection{Greedy Weighted Set Cover}\label{sect:greedy-set} %
Our first heuristic is an adaptation of the greedy weighted set cover algorithm
first described by Chvatal~\cite{Chvatal_1979}.
Similar to the binary LP, it uses the rectangle powerset of all base rectangles
of a given polygon $\Polygon$ as candidate set and seeks to ensure that all base
rectangles are covered.
The algorithm incrementally computes a cover $\Cover$ as follows:
Let $\Interior(\Cover) = \bigcup_{\Rect \in \Cover} \Interior(\Rect)$.
For each rectangle $\Rect \in \RectPow(\BaseRects(\Polygon))$,
the algorithm maintains its \emph{effective area} $a(\Rect)$ as
the area of $\Interior(\Rect) \setminus \Interior(\Cover)$.
Furthermore, the algorithm keeps track of all base rectangles that are
currently uncovered.
Note that as the rectangle powerset provides the candidates to
build $\Cover$ and two base rectangles can share at most a common edge but
never properly intersect, a base rectangle is either fully covered or fully
uncovered, but never partially covered.
Among all candidate rectangles, the algorithm chooses a rectangle $\Rect$
that minimizes the \emph{relative effective cost} $\frac{\Cost(\Rect)}{a(R)}$, adds it to $\Cover$,
and removes it from the set of candidates.
Afterwards, it updates the set of uncovered base rectangles,
as well as the effective area of all candidates, removing
any candidate with an effective area of $0$.
The procedure is repeated until no candidates are left.

\begin{lemma}\label{lem:greedy-time}
The Greedy Weighted Set Cover algorithm runs in $\bigO({|\BaseRects(\Polygon)|}^3) \subseteq \bigO(n^6)$ time, uses
$\bigO(|\BaseRects(\Polygon)|^2) \subseteq \bigO(n^4)$ space on a polygon $\Polygon$ with $2n$ edges, and returns an $\bigO(\log n)$-approximate solution for the \WRC{} problem.
\end{lemma}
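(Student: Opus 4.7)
The plan is to verify the three claims---time, space, and approximation ratio---in order, using the bounds $|\BaseRects(\Polygon)| = O(n^2)$ from \autoref{lem:num_baserects} and $|\RectPow(\BaseRects(\Polygon))| = O(|\BaseRects(\Polygon)|^2) = O(n^4)$ from \autoref{cor:rectpow_size}.

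For the \textbf{running time}, the main loop executes at most $|\BaseRects(\Polygon)|$ times, since each selected rectangle has strictly positive effective area and so newly covers at least one base rectangle. Within each iteration, a linear scan of the candidate pool finds the minimizer of $\Cost(R)/a(R)$ in $O(|\BaseRects(\Polygon)|^2)$ time. I handle updates in an amortized fashion: when a base rectangle $B$ becomes newly covered, I walk the candidate list and decrement the effective area of every candidate that contains $B$ (an $O(1)$ corner comparison per candidate), spending $O(|\BaseRects(\Polygon)|^2)$ per newly covered base rectangle; since each base rectangle is covered at most once, total update work is $O(|\BaseRects(\Polygon)|^3)$, matching the cumulative scan cost. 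For the \textbf{space bound}, it suffices to keep the candidate list with per-candidate cost and current effective area, plus a covered/uncovered flag per base rectangle, totalling $O(|\RectPow(\BaseRects(\Polygon))|) = O(|\BaseRects(\Polygon)|^2) = O(n^4)$; no auxiliary index is needed, because containment is decided by four corner comparisons.

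For the \textbf{approximation ratio}, by \autoref{thm:base-aligned} the optimum \WRC{} cost equals the optimum of the weighted set cover instance with universe $U = \BaseRects(\Polygon)$, sets $S_R = \{B \in \BaseRects(\Polygon) : \Interior(B) \subseteq \Interior(R)\}$, and set costs $\Cost_{\alpha,\beta}(R)$. The algorithm is Chv\'atal's weighted greedy modulo scoring candidates by uncovered area $a(R)$ rather than by uncovered element count. I plan to adapt the classical charging argument: for every base rectangle $B$ newly covered at iteration $i$ by rectangle $R^*_i$, assign a price $p_B := (\Cost(R^*_i)/a(R^*_i))\cdot\Area(B)$, so that the total greedy cost equals $\sum_B p_B$. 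Then, grouping the base rectangles by the optimum set $T_j$ containing them and processing them in the order the greedy covers them, I upper-bound $\Cost(R^*_i)/a(R^*_i)$ by $\Cost(T_j)$ divided by the still-uncovered area of $T_j$ at iteration $i$ (a valid inequality by the greedy choice) and telescope to obtain $\sum_B p_B \leq H_{|\BaseRects(\Polygon)|}\cdot \OPT = O(\log n)\cdot \OPT$.

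The main obstacle is precisely this area-weighted adaptation of Chv\'atal's analysis, since the textbook statement concerns scores $\Cost(R)/|S_R \cap U_{\mathrm{uncov}}|$ rather than $\Cost(R)/\sum_{B \in S_R \cap U_{\mathrm{uncov}}} \Area(B)$; verifying that the harmonic telescoping still yields an $H_{|\BaseRects(\Polygon)|}$ factor in the weighted version is the one nontrivial step and the rest of the proof consists of bookkeeping already justified by the cardinality bounds above.
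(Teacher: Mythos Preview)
Your running-time and space arguments are correct and essentially identical to the paper's own proof: bound the number of iterations by $b := |\BaseRects(\Polygon)|$, charge $O(b^2)$ per iteration for the linear scan over candidates, and amortize the updates via $\sum_i r_i = b$. (The paper's proof, incidentally, does not address the approximation claim at all.)

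Your approximation argument, however, has a genuine gap. You correctly flag that the algorithm scores by uncovered \emph{area} $a(R)$ rather than by uncovered element \emph{count}, and that adapting Chv\'atal is the one nontrivial step. But the telescoping you propose does not go through. For an optimal rectangle $T_j$ whose base rectangles $B_1,\dots,B_m$ (listed in the order greedy covers them) have areas $A_1,\dots,A_m$, your price bound yields
\[
\sum_{B\subseteq T_j} p_B \;\le\; \Cost(T_j)\sum_{k=1}^{m}\frac{A_k}{A_k+A_{k+1}+\dots+A_m},
\]
and this sum is \emph{not} bounded by $H_m$: take $A_k = M^{m-k}$ for large $M$, so that each summand tends to $1$ and the total tends to $m$. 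Hence the area-weighted charging gives only a factor linear in the number of base rectangles inside $T_j$, not logarithmic, and the argument as written does not establish an $O(\log n)$ ratio. If the greedy instead divided $\Cost(R)$ by the \emph{number} of still-uncovered base rectangles in $R$, Chv\'atal's analysis would apply verbatim with universe size $b = O(n^2)$ and give $H_b = O(\log n)$; with area-based scoring, what one can salvage from your outline (via $1-x\le -\ln x$) is a bound of the form $1+\ln\bigl(\Area(T_j)/A_{\min}\bigr)$ per optimal rectangle, which depends on the spread of base-rectangle areas and not on $n$ alone.
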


\begin{proof}
Let $b := |\BaseRects(\Polygon)|$ be the number of base rectangles of $\Polygon$.
Computing the initial candidate set and initializing the respective effective
areas takes $\bigO(b^2)$ time by \autoref{cor:rectpow_size}.
In each iteration of the algorithm, choosing the candidate rectangle $\Rect$
takes $\bigO(b^2)$ time if we keep all candidates in a list%
\footnote{We could use a heap, but as we need to update all candidates later, this does not pay off.}
and search it linearly.
We keep the set of uncovered base rectangles in a hash table and assume that
updates and queries can be handled in (expected) constant time.
Obtaining the set of newly covered base rectangles takes time $\bigO(b)$.
Let $r_i$ be the number of newly covered base rectangles in iteration $i$.
To update the effective area of a candidate rectangle, we test whether it
covers any of the newly covered base rectangles in $\bigO(r_i)$ time
and, if necessary, update its relative effective cost in constant time.
As there are $\bigO(b^2)$ candidates, this takes $\bigO(b^2 \cdot r_i)$ in the $i$-th iteration.
Each iteration covers at least one base rectangle, thus there are at most $\bigO(b)$
iterations, yielding a total running time of
$\bigO(b \cdot b^2 + b \cdot b + b^2 \cdot \sum_i r_i) = \bigO(b^3)$,
as each base rectangle can only be newly covered once.

We need to maintain $\bigO(b^2)$ candidates and monitor $\bigO(b)$ base rectangles,
so the space complexity is $\bigO(b^2)$.
The $\bigO(n^6)$ running time and $\bigO(n^4)$ space follow by \autoref{lem:num_baserects}.
\end{proof}

\subsection{Strip Cover}\label{sect:greedy-strip} %
Our next algorithm is based on the strip cover algorithm by Kumar and
Ramesh~\cite{Kumar_Ramesh_1999} (cf.~\autoref{sect:related}).
In the original description, the algorithm uses grid rectangles and does not
state any running times.
Whereas the adaptation to the base rectangle graph is straightforward,
we also describe how to implement the algorithm efficiently:
We first obtain the base rectangle graph $\BaseRectGraph(\Polygon)$ of the input polygon $\Polygon$.
For each node $\BaseRect$ of $\BaseRectGraph(\Polygon)$, let $h(\BaseRect)$
denote the \emph{height} of $\BaseRect$, which is defined as the length of the longest
unidirectional ``bottom-going'' path that starts with the edge to its bottom neighbor, $\{\BaseRect,
b(\BaseRect)\}$.
Thus, if $\BaseRect$ has no bottom neighbor, $h(\BaseRect) = 0$.
Now, for each node $\BaseRect$ that does \emph{not} have a \emph{top} neighbor $t(\BaseRect)$,
follow the unidirectional ``left-going'' path starting with the edge $\{\BaseRect, \ell(\BaseRect)\}$
until a node $\BaseRect'$ is reached that either has no left neighbor or
$h(\ell(\BaseRect')) < h(\BaseRect)$.
Symmetrically, follow the unidirectional ``right-going'' path starting with the edge $\{\BaseRect, r(\BaseRect)\}$
until a node $\BaseRect''$ is reached that either has no right neighbor or $h(r(\BaseRect'')) < h(\BaseRect)$.
From $\BaseRect''$, follow the unidirectional ``bottom-going'' path starting with $\{\BaseRect'', b(\BaseRect'')\}$
and stop after $h(\BaseRect)$ steps.
Let $\BaseRect'''$ be the last node of this path.
Report the rectangle $\Rect$ with $\TopLeft(\Rect) = \TopLeft(\BaseRect')$ and $\BottomRight(\Rect) = \BottomRight(\BaseRect''')$.
The set of all rectangles $\Rect$ obtained in this way, eliminating duplicates,
is returned as cover.
Note that by construction, each rectangle $\Rect$ is maximal.

\begin{lemma}
The greedy strip cover algorithm runs in $\bigO(n^2)$ time,
uses $\bigO(|\BaseRects(\Polygon)|)$ space on a polygon $\Polygon$ with $2n$ edges,
and returns a cover of size $\bigO(n)$.
\end{lemma}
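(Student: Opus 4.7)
The plan is to decompose the statement into three independent claims --- running time, space, and cover size --- and reduce each to the structural bounds already established in \autoref{lem:compute_baserectgraph} and \autoref{lem:num_baserects}. Correctness (that the reported rectangles actually form a cover) follows from the original argument of Kumar and Ramesh on grid rectangles; I would only note that every base rectangle lies inside exactly one reported strip, since the recursion defining $h(\BaseRect)$ plus the stopping conditions on the left- and right-going traversals mirror their construction verbatim.

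For the running time, I would proceed in three phases. First, build $\BaseRectGraph(\Polygon)$ via \autoref{lem:compute_baserectgraph} in $\bigO(n\log n + |\BaseRects(\Polygon)|) \subseteq \bigO(n^2)$ time. Second, compute the heights $h(\BaseRect)$ for \emph{all} nodes using the recurrence $h(\BaseRect) = 1 + h(b(\BaseRect))$ when $b(\BaseRect)$ exists and $0$ otherwise, memoizing results; since each node's value is computed only once and in constant time given its bottom neighbor's value, this phase costs $\bigO(|\BaseRects(\Polygon)|) \subseteq \bigO(n^2)$. Third, the main loop iterates only over nodes $\BaseRect$ without a top neighbor, of which there are $\bigO(n)$ by \autoref{lem:compute_baserectgraph} (they form a subset of the nodes with fewer than four neighbors). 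For each such starting node, the left-going, right-going, and finally bottom-going traversals are each unidirectional paths, so each has length $\bigO(n)$ by the same lemma; thus every iteration costs $\bigO(n)$ and the loop totals $\bigO(n^2)$. Summing the three phases gives $\bigO(n^2)$.

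The space and output-size bounds are then short. We store $\BaseRectGraph(\Polygon)$ together with one scalar $h(\cdot)$ per node, using $\bigO(|\BaseRects(\Polygon)|)$ words; the hash set for duplicate elimination, keyed on the two corner vertices of each reported rectangle, adds at most $\bigO(n)$ more words and is therefore absorbed. Since the main loop reports at most one rectangle per top-less starting node, the returned cover has size $\bigO(n)$. The main obstacle, and essentially the only nontrivial point, is verifying that the three unidirectional path bounds from \autoref{lem:compute_baserectgraph} genuinely apply here: the algorithm concatenates paths in three different directions, and one must confirm that each segment on its own is unidirectional in the sense of that lemma --- which it is, by construction of the traversal --- rather than treating the concatenation as a single walk, which would not be covered by the existing bound.
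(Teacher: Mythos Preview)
Your proposal is correct and follows essentially the same approach as the paper: build $\BaseRectGraph(\Polygon)$ via \autoref{lem:compute_baserectgraph}, compute all heights in a single linear pass, then use the $\bigO(n)$ bound on top-less nodes and on unidirectional path lengths from that lemma to bound each iteration by $\bigO(n)$ and the cover size by $\bigO(n)$. The only cosmetic difference is that the paper charges $\bigO(n)$ per duplicate test (linear scan) rather than using a hash set, but both fit inside the $\bigO(n^2)$ budget.
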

\begin{proof}
We first compute the base rectangle graph $\BaseRectGraph(\Polygon)$ in
$\bigO(n\log n + |\BaseRects(\Polygon)|)$ time
by \autoref{lem:compute_baserectgraph}.
Computing $h(\BaseRect)$ for each node $\BaseRect$ requires one traversal
of the graph and can thus be done in time linear in the size of $\BaseRectGraph(\Polygon)$.
Again by \autoref{lem:compute_baserectgraph}, the number of base rectangles
without top neighbor is in $\bigO(n)$ and each unidirectional path has length
$\bigO(n)$.
Hence, for each starting node $\BaseRect$, finding $\BaseRect'$, $\BaseRect''$,
$\BaseRect'''$ takes $\bigO(n)$ time.

The algorithm adds at most one rectangle to the cover for each base rectangle
without top neighbor, thus the size of the returned cover is in $\bigO(n)$.
Testing whether the resulting rectangle has already been found earlier
hence takes no more $\bigO(n)$ time, which yields a total running time of
$\bigO(n^2)$.%

Storing $\BaseRectGraph(\Polygon)$ requires $\bigO(|\BaseRects(\Polygon)|)$ space.
As we compute a cover of cardinality $\bigO(n)$, we need at most $\bigO(n)$
space to store the result and implement the duplicate elimination.
\end{proof}
As mentioned above, the cover computed by the algorithm is a subset of all
maximal rectangles $\MaxRects(\Polygon)$, but $|\MaxRects(\Polygon)|$ can be as
large as $\bigO(n^2)$~\cite{DBLP:journals/siamdm/Franzblau89}.

Note that the Strip Cover algorithm does not consider any cost
function.
To make it ``cost-aware'', we will combine it with postprocessors, which
are described in \autoref{sect:postproc}.
\subsection{Partition Algorithm}\label{sect:partition} %
For reasons of self-containedness, we here briefly review Ohtsuki's
optimal rectangle partitioning algorithm~\cite{Ohtsuki_1982} before we describe our adaptations for
the rectangle cover problem, which are implemented as postprocessing routines and described in \autoref{sect:join}.

The algorithm computes a partition by first locating %
pairs of concave vertices that can be connected by a straight horizontal or vertical segment in a given polygon $\Polygon$'s interior.
Such a segment is also called \emph{degenerate diagonal} or \emph{chord}.
Construct a graph with a node for each chord and an edge between two nodes if the corresponding chords properly %
cross each other in the interior.
As two horizontal (two vertical) chords never cross, the graph is bipartite.
Next, compute a maximum-cardinality independent set $I$, which can be done in polynomial time on bipartite graphs via matching. %
The chords in $I$ partition $\Polygon$ into smaller polygons.
For each concave vertex in the remaining, smaller polygons, extend a horizontal (or vertical)
line into the interior until it hits an edge of the polygon.
The result is an optimal partition of $\Polygon$ into a set of rectangles.

\begin{lemma}[\cite{Hiroshi_Takao_1986}]\label{lem:partition-time}
The partition algorithm can be implemented to run in time $\bigO(n^{1.5} \log{n})$
on a polygon with $2n$ edges.
\end{lemma}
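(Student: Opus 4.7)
The plan is to follow the five steps of the partition algorithm and locate the bottleneck, which will be the bipartite matching subroutine.

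First, I would compute the set of chords. Using a horizontal plane sweep over the concave vertices of $\Polygon$, the $\bigO(n)$ horizontal chords through $\Interior(\Polygon)$ can be enumerated in $\bigO(n \log n)$ time; the vertical chords are obtained symmetrically. Next comes the bipartite crossing graph $G$ whose two sides are horizontal and vertical chords, with an edge whenever they properly cross in $\Interior(\Polygon)$. The number of edges in $G$ can be $\Theta(n^2)$ in the worst case, so constructing $G$ explicitly is already too costly and this is precisely where the $\bigO(n^{2.5})$ running time of the naive implementation comes from. By König's theorem, the maximum independent set of $G$ is the complement of a minimum vertex cover, whose size equals that of a maximum matching, so the algorithmic task is to compute a maximum bipartite matching on $G$.

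The key step, and the main obstacle, is to run a Hopcroft-Karp-style matching without materializing $G$. The plan is to keep two orthogonal range-search structures (one for horizontal, one for vertical chords, indexed by the other coordinate range), so that for any chord $c$ the unmatched chords that cross $c$ can be enumerated in $\bigO(\log n)$ time per reported crossing, and individual adjacency queries answered in $\bigO(\log n)$. The BFS-phases of Hopcroft-Karp then proceed layer by layer, each touched edge carrying an $\bigO(\log n)$ overhead from the data structure. The main analytic task is to argue that, across all $\bigO(\sqrt{n})$ phases, the total number of edges actually visited remains $\bigO(n^{1.5})$: this uses the geometric structure (edges traversed on augmenting paths correspond to nested chord crossings and each chord is charged only a bounded number of times per phase), rather than the worst-case $\bigO(n^2)$ edge count of $G$. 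Combining $\bigO(\sqrt{n})$ phases with $\bigO(n^{1.5})$ total edge visits at $\bigO(\log n)$ per visit yields the claimed $\bigO(n^{1.5} \log n)$ bound for the matching.

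Finally, extracting the maximum independent set from the matching, cutting $\Polygon$ along the chosen chords, and then eliminating the remaining concave vertices by horizontal or vertical ray-extensions each take time linear in the number of vertices and chord fragments, so these post-matching steps are dominated by the matching bound. Hence the whole partition algorithm runs in $\bigO(n^{1.5} \log n)$ time, as claimed.
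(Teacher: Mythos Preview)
The paper does not prove this lemma at all; it is simply cited from Imai and Asano~\cite{Hiroshi_Takao_1986}, so there is no ``paper's own proof'' to compare against. Your sketch is broadly aligned with what Imai and Asano actually do: run a Hopcroft--Karp style matching on the bipartite chord-intersection graph, but replace explicit adjacency lists by orthogonal range-searching structures so that the $\Theta(n^2)$ edges of $G$ are never materialized.

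That said, your accounting for the per-phase cost is hand-wavy and not quite the right mechanism. The bound does not come from ``edges traversed on augmenting paths correspond to nested chord crossings''; it comes from the fact that the range structures support \emph{deletion}. In each BFS/DFS phase, once a chord is reached it is deleted from the structure, so subsequent neighbor queries never report it again. Hence each of the $\bigO(n)$ chords is touched $\bigO(1)$ times per phase, at $\bigO(\log n)$ cost per touch, giving $\bigO(n\log n)$ per phase irrespective of $|E(G)|$. Multiplying by the $\bigO(\sqrt{n})$ phases of Hopcroft--Karp yields $\bigO(n^{1.5}\log n)$. Your final arithmetic comes out right, but the justification you give (a geometric charging argument over ``nested crossings'') is not what makes it work and would not stand as a proof without the deletion argument made explicit.
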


\subsection{Postprocessing}\label{sect:postproc} %
Recall that neither the Strip Cover algorithm (\autoref{sect:greedy-strip}) nor
the Partition algorithm (\autoref{sect:partition}) use rectangle weights to
compute a cover.
We therefore introduce four postprocessing routines that connect the aforementioned
algorithms to the \WRC{} problem and also serve to improve the initial solution
by the Greedy Set Cover algorithm (\autoref{sect:greedy-set}).
All postprocessors assume a given initial cover $\Cover$ and can be chained.
\subsubsection{Prune}\label{sect:min}\label{sect:prune} %
This postprocessor tests for each rectangle $\Rect \in \Cover$ whether all base
rectangles contained in it are also covered by at least one other rectangle in
the cover.
If this is the case, rectangle $\Rect$ is removed from $\Cover$ and the
algorithm continues with the next rectangle.

\begin{lemma}\label{lem:prune}
The prune postprocessor runs in time
$\bigO(|\Cover|\cdot |\BaseRects(\Polygon)|)$
for a polygon $\Polygon$ and an initial cover $\Cover$.
\end{lemma}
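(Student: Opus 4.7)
The plan is to maintain, for every base rectangle $\BaseRect \in \BaseRects(\Polygon)$, a coverage counter $c(\BaseRect)$ that records how many rectangles of the current cover $\Cover$ contain $\BaseRect$. By Theorem~\autoref{thm:base-aligned}, we may assume $\Cover \subseteq \RectPow(\BaseRects(\Polygon))$, so each $\Rect \in \Cover$ covers a well-defined set $\mathcal{B}_\Rect := \{\BaseRect \in \BaseRects(\Polygon) \colon \Interior(\BaseRect) \subseteq \Interior(\Rect)\}$ with $|\mathcal{B}_\Rect| \leq |\BaseRects(\Polygon)|$. The pruning test ``is every base rectangle of $\mathcal{B}_\Rect$ covered by some other rectangle of $\Cover$?'' then reduces to the constant-per-element check $c(\BaseRect) \geq 2$ for every $\BaseRect \in \mathcal{B}_\Rect$.

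First, I would compute the base rectangle graph $\BaseRectGraph(\Polygon)$ once via \autoref{lem:compute_baserectgraph}, which is subsumed by the claimed bound. I initialize all counters to zero and, for each $\Rect \in \Cover$, enumerate $\mathcal{B}_\Rect$ by a traversal of $\BaseRectGraph(\Polygon)$ starting at the base rectangle at $\TopLeft(\Rect)$ and following right/bottom edges that stay within the coordinates of $\Rect$, incrementing $c(\BaseRect)$ at each visited node. Each such traversal performs $\bigO(|\mathcal{B}_\Rect|) \subseteq \bigO(|\BaseRects(\Polygon)|)$ work, so initialization costs $\bigO(|\Cover|\cdot|\BaseRects(\Polygon)|)$ in total. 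The pruning phase then iterates over each $\Rect \in \Cover$, repeats the same traversal to inspect the counters, and, if every visited counter is at least $2$, removes $\Rect$ from $\Cover$ and decrements the counters for all $\BaseRect \in \mathcal{B}_\Rect$; otherwise $\Rect$ is retained and the counters are left intact. Each iteration again costs $\bigO(|\BaseRects(\Polygon)|)$, yielding the claimed overall bound of $\bigO(|\Cover|\cdot|\BaseRects(\Polygon)|)$.

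The main obstacle is arguing that $\mathcal{B}_\Rect$ can in fact be enumerated in time $\bigO(|\BaseRects(\Polygon)|)$ using $\BaseRectGraph(\Polygon)$. This follows because any $\Rect \in \RectPow(\BaseRects(\Polygon))$ is, by definition, tiled by a rectangular block of base rectangles whose topological adjacencies are exactly the edges of $\BaseRectGraph(\Polygon)$; a graph search restricted to nodes whose coordinates lie within $\Rect$ (a constant-time test per visited node, given that the coordinates of $\Rect$ and of each $\BaseRect$ are stored) therefore visits precisely the elements of $\mathcal{B}_\Rect$, each a constant number of times. A secondary subtlety is that the pruning decisions are sequential: removing an earlier $\Rect$ lowers counters and could conceivably prevent a later removal, but this is consistent with the postprocessor's specification, which tests each rectangle against the \emph{current} cover, so no correctness issue arises and the running-time analysis is unaffected.
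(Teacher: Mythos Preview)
Your proposal is correct and follows essentially the same approach as the paper: maintain a coverage counter per base rectangle, initialize the counters by one pass over $\Cover$, then do a second pass testing each rectangle for redundancy via the counters and decrementing on removal. One minor remark: invoking Theorem~\ref{thm:base-aligned} to justify $\Cover \subseteq \RectPow(\BaseRects(\Polygon))$ is not quite right, since that theorem concerns the existence of an \emph{optimal} base-aligned cover rather than an arbitrary given cover; the assumption is nonetheless valid here because all base algorithms in the paper output covers from $\RectPow(\BaseRects(\Polygon))$ by construction.
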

\begin{proof}
We first compute the base rectangle graph $\BaseRectGraph(\Polygon)$ in
$\bigO(n\log n + |\BaseRects(\Polygon)|)$ time by \autoref{lem:compute_baserectgraph}.
We maintain a counter for each base rectangle $\BaseRect \in \BaseRects(\Polygon)$ that stores
the number of cover rectangles it is contained in.
To initialize the counters, for each rectangle $\Rect \in \Cover$, we iterate
over all base rectangles that are fully contained in $\Rect$ and increase their
counter.
We then iterate a second time over all $\Rect \in \Cover$, check each for
redundancy using the counters, and update the counters where necessary.
The total running time hence is $\bigO(|\Cover|\cdot |\BaseRects(\Polygon)|)$.
\end{proof}

\subsubsection{Trim}\label{sect:trim} %
Similar to the prune postprocessor, trim checks for redundancies, however
w.r.t.\ the area of a rectangle in the cover.
For each rectangle $\Rect \in \Cover$, trim identifies the set of base rectangles $U_{\Rect}$ that
are only contained in and thus covered by $\Rect$.
It then replaces $\Rect$ in $\Cover$ by the bounding box of these base
rectangles, $\BBox(U_{\Rect})$.
Trim can be implemented analogously to prune.

\begin{lemma}
The trim postprocessor runs in time
$\bigO(|\Cover|\cdot |\BaseRects(\Polygon)|)$
for a polygon $\Polygon$ and an initial cover $\Cover$.
\end{lemma}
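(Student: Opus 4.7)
The plan is to mirror the proof of \autoref{lem:prune} for the prune postprocessor, with minor additional bookkeeping to compute the bounding box $\BBox(U_{\Rect})$ on the fly. As a setup step, I would first build the base rectangle graph $\BaseRectGraph(\Polygon)$ in $\bigO(n \log n + |\BaseRects(\Polygon)|)$ time via \autoref{lem:compute_baserectgraph}, and attach to every base rectangle $\BaseRect \in \BaseRects(\Polygon)$ an integer counter $\mathit{cov}(\BaseRect)$ initialised to $0$.

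Next, I would initialise the counters by iterating once over the initial cover $\Cover$: for each $\Rect \in \Cover$ I enumerate the base rectangles contained in $\Rect$ (which are at most $|\BaseRects(\Polygon)|$ many and can be reported in time proportional to their number by walking through $\BaseRectGraph(\Polygon)$ starting from the top-left base rectangle of $\Rect$) and increment the corresponding $\mathit{cov}$-counter. This costs $\bigO(|\Cover|\cdot |\BaseRects(\Polygon)|)$ in total.

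In the main pass I would again iterate over each $\Rect \in \Cover$ and enumerate its contained base rectangles a second time. While doing so, I maintain the set $U_{\Rect} := \{\BaseRect \subseteq \Rect : \mathit{cov}(\BaseRect) = 1\}$ implicitly by tracking only the extrema $x_{\min}, y_{\min}, x_{\max}, y_{\max}$ over the vertices of base rectangles in $U_{\Rect}$; this suffices to output $\BBox(U_{\Rect})$ in constant additional time per base rectangle. Then $\Rect$ is replaced in $\Cover$ by $\BBox(U_{\Rect})$, and for every base rectangle $\BaseRect \subseteq \Rect$ that is not contained in $\BBox(U_{\Rect})$ we decrement $\mathit{cov}(\BaseRect)$ so that subsequent iterations see accurate counts. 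Each such inner loop is again $\bigO(|\BaseRects(\Polygon)|)$, giving $\bigO(|\Cover|\cdot |\BaseRects(\Polygon)|)$ overall, which dominates the $\bigO(n\log n)$ preprocessing.

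The only subtle point, and the step I would scrutinise most carefully, is that the $\mathit{cov}$-counters remain correct across iterations even though some rectangles $\Rect$ get shrunk to $\BBox(U_{\Rect})$: base rectangles that $\Rect$ used to cover but the new bounding box does not must have had $\mathit{cov}(\BaseRect) \geq 2$ at the moment $\Rect$ was processed (otherwise they would have belonged to $U_{\Rect}$ and would still be covered), so the decrement keeps all counters non-negative and consistent with the current cover. With that invariant, the overall time bound of $\bigO(|\Cover|\cdot |\BaseRects(\Polygon)|)$ follows immediately.
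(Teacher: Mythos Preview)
Your proposal is correct and follows exactly the route the paper takes: the paper's own proof is the single sentence ``Trim can be implemented analogously to prune,'' and your write-up is a faithful (and more detailed) unfolding of that analogy, including the counter maintenance from \autoref{lem:prune} plus the on-the-fly bounding-box bookkeeping. The invariant you single out (that any base rectangle dropped when shrinking $\Rect$ to $\BBox(U_{\Rect})$ had $\mathit{cov}\geq 2$) is precisely what makes the counter updates sound, so nothing is missing.
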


\subsubsection{Rectangle Splits}\label{sect:split} %
The idea behind rectangle splits is to eliminate rectangles that have large
overlap with other rectangles in the cover, but cannot be pruned or trimmed any
further.
To split a rectangle $\Rect$, we first remove it from the cover.
As $\Rect$ could not be trimmed any further, this results in a non-empty set of
uncovered base rectangles $U_{\Rect}$.
For each polygon formed by a maximally connected subset of $U_{\Rect}$, called
\emph{gap}, the rectangle split postprocessors try to newly cover the gap
using different approaches.
The split is accepted if the total cost of the new cover of $U_{\Rect}$ is less than
the cost of $\Rect$, and rejected and undone otherwise.

\begin{lemma}\label{lem:gap-vertices}
For a set of gaps formed by the uniquely covered base rectangles $U_{\Rect}$ of
a rectangle $\Rect$ within a cover $\Cover$ for a polygon $\Polygon$, the total
number of vertices of all gaps is in $\bigO({|\Cover|}^2)$.
\end{lemma}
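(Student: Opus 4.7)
The plan is to relate the total gap-boundary complexity to the complexity of an arrangement of axis-aligned rectangles restricted to $\Interior(\Rect)$. First I would observe that the gaps are precisely the connected components of $\Interior(\Rect) \setminus \bigcup_{\Rect' \in \Cover\setminus\{\Rect\}} \Interior(\Rect')$, so each gap is a rectilinear polygon (possibly with holes) whose boundary lies entirely within $\partial\Rect \cup \partial U$, where $U := \bigcup_{\Rect' \in \Cover\setminus\{\Rect\}} \Interior(\Rect')$. Consequently, every gap vertex is either a corner of $\Rect$, a corner of some other rectangle of $\Cover$, or a point where two rectangle edges cross.

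Next, I would classify every vertex appearing on any gap boundary into four categories: (a) the corners of $\Rect$ itself, (b) corners of some $\Rect' \in \Cover\setminus\{\Rect\}$ that lie in $\Interior(\Rect)$, (c) crossings of an edge of $\Rect$ with an edge of some $\Rect' \in \Cover\setminus\{\Rect\}$, and (d) crossings of edges of two distinct rectangles $\Rect', \Rect'' \in \Cover\setminus\{\Rect\}$ that lie strictly inside $\Rect$. Categories (a)--(c) each contribute only $\bigO(|\Cover|)$ vertices via a direct edge-by-edge count, since each of the four sides of $\Rect$ meets any other rectangle's boundary in at most a constant number of points, and every $\Rect' \in \Cover$ has exactly four corners.

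The main work is bounding category (d). Here I would invoke the union-complexity bound for $r$ axis-aligned rectangles, which is $\bigO(r^2)$ (the same bound cited in \autoref{sect:related}). Because every category-(d) vertex is a vertex of the boundary $\partial U$ of the union of at most $|\Cover|-1$ axis-aligned rectangles, this yields at most $\bigO({|\Cover|}^2)$ such vertices. Summing over all four categories gives the claimed $\bigO({|\Cover|}^2)$ bound on the total number of vertices of all gaps.

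The step I expect to be the most delicate is justifying that the four categories above really exhaust all gap vertices. In particular, one must argue that reflex gap vertices (which arise when a gap is indented from outside by another rectangle of $\Cover$) are already captured by categories (b) and (d), and that no additional vertex is produced along a collinear overlap of rectangle edges that is not itself an endpoint or a true crossing. Both should follow from a local case analysis around each boundary corner: at any such corner, the boundary of the gap must change direction, and this can only happen at a corner of one of the participating rectangles or at a transversal intersection of two edges, which is exactly what categories (a)--(d) enumerate.
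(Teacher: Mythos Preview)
Your proposal is correct and follows essentially the same approach as the paper: both arguments reduce the count to the union-complexity bound $\bigO(r^2)$ for $r$ axis-aligned rectangles applied to $\Cover\setminus\{\Rect\}$. The paper's version is terser---it lets $Q=\bigcup_{\Rect'\in\Cover\setminus\{\Rect\}}\Rect'$ and observes that every gap vertex is a vertex of $Q$, a vertex of $\Rect$, or a point where $\partial\Rect$ meets $\partial Q$---while your explicit four-way case split (a)--(d) unpacks the same trichotomy with a slightly sharper $\bigO(|\Cover|)$ bound on category~(c).
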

\begin{proof}
The polygon $Q$ resulting from the removal of $\Rect$ from $\Cover$ is the
union of $|\Cover|-1$ rectangles.
As the union complexity of a set of rectangles is quadratic in the size of
the set~\cite{DBLP:journals/combinatorics/KellerS18},
both $\Polygon$ and $Q$ have $\bigO({|\Cover|}^2)$ vertices.

Each vertex of a gap is either a vertex of $Q$, $\Rect$, or lies
on the intersection of an edge of $\Rect$ and $Q$ each, making
it a vertex of $\Polygon$.
Hence, the total number of vertices of all gaps is $\bigO({|\Cover|}^2)$.
\end{proof}

\paragraph{Bounding Box Split.}
This postprocessor simply covers each gap by its bounding box.
As each gap originally was contained in a larger rectangle that was part of the cover,
the bounding box of a gap must be fully contained in the polygon's interior.

\begin{lemma}\label{lem:bbox-split}
The bounding box split postprocessor runs in time $\bigO({|\Cover|}^3)$
for a polygon $\Polygon$ and a cover $\Cover$.
\end{lemma}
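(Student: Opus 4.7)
The plan is to bound the cost of attempting a split of a single rectangle $R \in \Cover$ by $\bigO(|\Cover|^2)$ and then to multiply by the $|\Cover|$ candidate rectangles the postprocessor iterates over.

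First, I would describe how to enumerate the gaps induced by removing a fixed $R$ from $\Cover$. The uncovered region in $R$ is precisely $\Interior(R) \setminus \bigcup_{R' \in \Cover \setminus \{R\}} \Interior(R')$, so each connected component of this set is a gap. Rather than walking through all base rectangles contained in $R$ (which could cost as much as $|\BaseRects(\Polygon)|$ per iteration and blow the bound up beyond $\bigO(|\Cover|^3)$), I would compute the arrangement of the rectangles in $\Cover \setminus \{R\}$ clipped to $R$ directly. The union-complexity argument underlying \autoref{lem:gap-vertices} guarantees that this arrangement has complexity $\bigO(|\Cover|^2)$, so a standard plane-sweep yields the boundaries of all gaps, and a traversal of the resulting planar subdivision extracts them as connected components, all within $\bigO(|\Cover|^2)$ time (up to a $\log$ factor that is absorbed by the claimed bound or avoided by a monotone-ordering argument using the rectilinear structure).

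Next, for each gap I would read off its extreme $x$- and $y$-coordinates from its vertex list to obtain its bounding box. Because each gap lies inside $R$ and $R$ was previously part of $\Cover$, the bounding box of the gap is contained in $\Interior(\Polygon)$, so this really produces a valid local cover. By \autoref{lem:gap-vertices} the total vertex count across all gaps of $R$ is $\bigO(|\Cover|^2)$, hence all bounding boxes together are produced in $\bigO(|\Cover|^2)$ time, and likewise the number of gaps is $\bigO(|\Cover|^2)$, so evaluating $\sum_{\text{gap } G} \Cost_{\alpha,\beta}(\BBox(G))$ and comparing it with $\Cost_{\alpha,\beta}(R)$ costs $\bigO(|\Cover|^2)$. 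Accepting or rejecting the split requires only inserting the new boxes and either deleting $R$ or reinserting it, again $\bigO(|\Cover|^2)$.

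Summing these contributions, the work for a single candidate $R$ is $\bigO(|\Cover|^2)$, and repeating over all $|\Cover|$ rectangles in the initial cover gives the claimed $\bigO(|\Cover|^3)$ bound. The main obstacle is the gap-extraction step: a naive enumeration over base rectangles is not good enough, and the proof must lean on the quadratic union-complexity bound of rectilinear rectangles (the same ingredient used in \autoref{lem:gap-vertices}) so that the geometry of the cover—rather than that of $\Polygon$—drives the running time.
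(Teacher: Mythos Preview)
Your proposal is correct and follows the same overall structure as the paper's proof: invoke \autoref{lem:gap-vertices} to bound the number (and total complexity) of gaps per removed rectangle by $\bigO(|\Cover|^2)$, then multiply by the $|\Cover|$ candidates. You are in fact more careful than the paper, which simply counts the gaps and does not discuss how to extract them; your arrangement-based gap enumeration fills that in, though note that the ``$\log$ factor is absorbed'' remark is not quite right (it would give $\bigO(|\Cover|^3\log|\Cover|)$), so you would need the rectilinear monotone-ordering argument you allude to if you want the bound as stated.
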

\begin{proof}
By \autoref{lem:gap-vertices}, the total number of vertices for all gaps resulting
from the removal of a single rectangle from the cover is in $\bigO({|\Cover|}^2)$.
Hence, for each rectangle, there are at most $\bigO({|\Cover|}^2)$ gaps to
cover, resulting in a total running time of $\bigO({|\Cover|}^3)$.
\end{proof}

\paragraph{Partition Split.}
As the name suggests, this postprocessor covers each gap using the
partition algorithm (cf.\ \autoref{sect:partition}). 

\begin{lemma}\label{lem:partition-split}
The partition split postprocessor runs in time $\bigO({|\Cover|}^4\log|\Cover|)$
for a polygon $\Polygon$ and a cover $\Cover$.
\end{lemma}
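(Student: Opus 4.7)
The plan is to bound the work spent per cover rectangle and then sum over all $|\Cover|$ of them. For each $\Rect \in \Cover$, I split the postprocessor's action into three subroutines: (i)~identify the gaps that arise after provisionally removing $\Rect$ from the cover, (ii)~run the partition algorithm of \autoref{sect:partition} on each gap, and (iii)~compute the total cost of the resulting partition, compare it against $\Cost_{\alpha,\beta}(\Rect)$, and accept or reject.

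For step (i), the set of gaps is the collection of connected components of the region inside $\Rect$ that is not covered by any other rectangle of $\Cover$. Since this region is the complement, within $\Rect$, of an arrangement of at most $|\Cover|-1$ axis-aligned rectangles, I would extract it in $\bigO({|\Cover|}^2)$ time by a standard sweep, matching the $\bigO({|\Cover|}^2)$ total vertex bound from \autoref{lem:gap-vertices}.

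The main obstacle, and heart of the argument, is step (ii). Let $n_1,\dots,n_k$ denote the vertex counts of the gaps obtained from $\Rect$, so by \autoref{lem:gap-vertices} we have $\sum_{i=1}^{k} n_i \in \bigO({|\Cover|}^2)$. By \autoref{lem:partition-time}, partitioning gap $i$ costs $\bigO(n_i^{1.5}\log n_i)$, so the combined partitioning work for this $\Rect$ is $\sum_i n_i^{1.5}\log n_i$. I would bound this using $\log n_i \le \log \sum_j n_j$ together with the standard superadditivity inequality $\sum_i n_i^{1.5} \le \bigl(\sum_i n_i\bigr)^{1.5}$ (valid for exponent $\geq 1$ on nonnegative reals), yielding $\bigO\bigl({|\Cover|}^{3}\log|\Cover|\bigr)$ per rectangle.

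Step (iii) is routine: the partition of gap $i$ contains $\bigO(n_i)$ rectangles, so computing the total cost and comparing against $\Cost_{\alpha,\beta}(\Rect)$ is $\bigO({|\Cover|}^2)$ per $\Rect$. Combining (i)--(iii) gives a per-rectangle cost of $\bigO({|\Cover|}^3 \log |\Cover|)$, and multiplying by the $|\Cover|$ outer iterations yields the claimed $\bigO({|\Cover|}^4 \log |\Cover|)$ bound.
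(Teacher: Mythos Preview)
Your argument is correct and follows essentially the same route as the paper: both invoke \autoref{lem:gap-vertices} to bound the total number of gap vertices per removed rectangle by $\bigO({|\Cover|}^2)$, feed this into \autoref{lem:partition-time} to obtain $\bigO({|\Cover|}^3\log|\Cover|)$ partitioning work per rectangle, and multiply by $|\Cover|$. You simply spell out the superadditivity step $\sum_i n_i^{1.5}\le\bigl(\sum_i n_i\bigr)^{1.5}$ and the auxiliary costs of gap extraction and cost comparison, which the paper leaves implicit.
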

\begin{proof}
By \autoref{lem:gap-vertices}, the total number
of vertices for all gaps
resulting from the removal of a single rectangle from the cover is in
$\bigO({|\Cover|}^2)$.
Hence, computing a partition for all gaps can be done in
$\bigO({|\Cover|}^3\log|\Cover|)$ time per removed rectangle by \autoref{lem:partition-time},
and thus in $\bigO({|\Cover|}^4\log|\Cover|)$ time overall.
\end{proof}

\subsubsection{Rectangle Joins}\label{sect:join} %
We now describe two postprocessors that try to replace a set of rectangles
contained in the cover by a single rectangle if this improves the cost of the
cover.
The postprocessors differ in the set of rectangles they consider for a join.
Joining a set of rectangles $\Rectangles \subseteq \Cover$
removes $\Rectangles$ from $\Cover$ and inserts the bounding box of the set $\BBox(\Rectangles)$.
\paragraph{Simple Join.}
For the simple or aligned join, the algorithm only considers sets of rectangles where all
rectangles have the same minimum and maximum $y$-coordinate, \ie, they are
horizontally aligned, or sets of rectangles where all rectangles have the same
minimum and maximum $x$-coordinate (vertically aligned).

\begin{lemma}
The simple aligned join postprocessor runs in time $\bigO(|\Cover|\cdot n)$
for a polygon $\Polygon$ with $2n$ edges and a cover $\Cover$.
\end{lemma}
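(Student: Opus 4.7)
The plan is to exploit the structural restriction that two horizontally aligned rectangles share a $y$-range, so the bounding box of any aligned subset is again a rectangle with that $y$-range whose $x$-interval is the hull of the constituents' $x$-intervals. By symmetry it suffices to bound the running time of the horizontal pass; the vertical pass is fully analogous.

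I would first partition the rectangles of $\Cover$ into groups keyed by their $(y_{\min}, y_{\max})$ pair --- via hashing in expected $\bigO(|\Cover|)$ time, or via a lexicographic sort in $\bigO(|\Cover|\log|\Cover|)$ deterministic time --- and then sort each group by $x_{\min}$. Summed across groups, the sorting costs $\bigO(|\Cover|\log|\Cover|)$. A single left-to-right sweep of each sorted group identifies \emph{candidate joins}: the bounding boxes of maximal contiguous blocks, \ie\ maximal subsequences within a group whose $x$-intervals cover a single interval without a gap. Since each rectangle belongs to exactly one group and exactly one maximal contiguous block, the number of candidates and the summed block size are both in $\bigO(|\Cover|)$.

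For each candidate bounding box $\BBox$, the algorithm performs two tests: (a) containment, $\BBox \subseteq \Interior(\Polygon)$, and (b) a cost comparison between $\Cost_{\alpha,\beta}(\BBox)$ and the summed cost of the rectangles it would replace. Test (b) is $\bigO(1)$ per replaced rectangle and hence $\bigO(|\Cover|)$ overall. Test (a) is implemented in $\bigO(n)$ by scanning the edges of $\AllEdges(\Polygon)$ for one whose relative interior meets $\Interior(\BBox)$: because at least one constituent of $\BBox$ already lies in $\Interior(\Polygon)$, the absence of such an edge rules out both boundary crossings and holes entirely inside $\BBox$, and thus certifies $\BBox \subseteq \Interior(\Polygon)$. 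Summed over the $\bigO(|\Cover|)$ candidates, the containment checks take $\bigO(|\Cover|\cdot n)$ time, which dominates the $\bigO(|\Cover|\log|\Cover|)$ grouping and sorting cost.

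The main subtlety is choosing the candidate set: enumerating all subsets of an aligned group is combinatorially infeasible, so one must restrict to structurally motivated candidates. Maximal contiguous blocks are a natural such restriction because their bounding box is automatically the union of the constituents, so containment in $\Interior(\Polygon)$ is free for this special case; the $\bigO(n)$ containment test quoted above is only required for extensions of the candidate set that include non-contiguous bounding boxes (\eg\ when a gap is covered by a rectangle from outside the current aligned group), and even with such extensions the analysis still delivers the claimed $\bigO(|\Cover|\cdot n)$ total. The vertical pass is symmetric, completing the bound.
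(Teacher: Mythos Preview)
Your argument is correct and follows essentially the same outline as the paper: bucket the rectangles of $\Cover$ by their shared $(y_{\min},y_{\max})$ (resp.\ $(x_{\min},x_{\max}))$ pair, sort within each bucket, sweep left to right, and charge an $\bigO(n)$ containment test to each of the $\bigO(|\Cover|)$ join attempts.

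The differences are only in the details. The paper's sweep is a straightforward greedy pairwise merge---each rectangle is tentatively joined with its (possibly already merged) predecessor, and the $\bigO(n)$ feasibility test is a diagonal-segment intersection check against $\AllEdges(\Polygon)$---so there is no need to first isolate maximal contiguous blocks and then separately treat ``extensions'' across gaps. Your final paragraph hedges more than necessary: since every join attempt (contiguous or not) costs $\bigO(n)$ and there are $\bigO(|\Cover|)$ of them, the bound follows directly without distinguishing the two cases. The paper additionally assumes the cover has been pruned and trimmed, which via \autoref{lem:compute_baserectgraph} bounds each bucket's size by $\bigO(n)$; you avoid this assumption by bounding the sorting cost globally as $\bigO(|\Cover|\log|\Cover|)$, which is a minor but valid simplification.
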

\begin{proof}
We first iterate over all $\Rect \in \Cover$ and sort each into two buckets
identified by a pair of $x$- or $y$-coordinates, respectively, according to the
minimum and maximum $x$- or $y$-coordinates of $\Rect$.
We assume that the rectangles in a bucket do not overlap, \ie, if necessary
$\Cover$ has been pruned (cf.\ \autoref{sect:prune}) and trimmed (cf.\
\autoref{sect:trim}) before.
Thus, each bucket contains at most $\bigO(n)$ rectangles by
\autoref{lem:compute_baserectgraph} (longest unidirectional path).
We sort each bucket identified by $y$-coordinates according to the $x$-coordinates of its
rectangles, and vice-versa.
The test for joins can then be implemented as a sweep line process, where a rectangle $\Rect$
is joined with its predecessor rectangle $\Rect'$ if and only if the join leads to a reduction
in cost and the joined rectangle is fully contained in $\Interior(\Polygon)$ by
checking whether a straight diagonal line from $\TopLeft(\Rect')$ to
$\BottomRight(\Rect)$ intersects with an edge of the polygon in $\bigO(n)$ time.
The sweep line processes take $\bigO(|\Cover| \cdot n)$ total time and dominate
the rest.
\end{proof}
\paragraph{Full Join.}
The postprocessor works similar to the simple aligned join, but considers
joins of arbitrary sets of rectangles.

\begin{lemma}\label{lem:join-full}
The full join postprocessor runs in time $\bigO({|\Cover|}^2\cdot n)$
for a polygon $\Polygon$ with $2n$ edges and a cover $\Cover$.
\end{lemma}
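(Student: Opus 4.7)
The plan is to mirror the sweep-based analysis of the simple aligned join, but without the bucketing that forces rectangles to be pre-aligned. I would enumerate all $\binom{|\Cover|}{2} = \bigO(|\Cover|^2)$ unordered pairs of rectangles in $\Cover$ and, for each pair $(\Rect_i, \Rect_j)$, generate a single candidate merged rectangle $B := \BBox(\{\Rect_i, \Rect_j\})$ in constant time from the corners of $\Rect_i$ and $\Rect_j$. Two tests then decide whether the join is performed. First, verify $B \subseteq \Interior(\Polygon)$ by a single pass through the $\bigO(n)$ edges in $\AllEdges(\Polygon)$, testing each for intersection with $\partial B$ and checking that no hole of $\Polygon$ lies fully inside $B$; this is the analogue of the diagonal-intersection test used for the simple join, applied to an arbitrary rectangle rather than one confined to a horizontal or vertical strip. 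Second, identify the set $S_B := \{\Rect \in \Cover \colon \Rect \subseteq B\}$ (which necessarily contains $\Rect_i$ and $\Rect_j$) and compare $\Cost(B)$ with $\sum_{\Rect \in S_B}\Cost(\Rect)$; the replacement $\Cover \leftarrow (\Cover \setminus S_B)\cup\{B\}$ is performed iff this comparison is favorable.

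Per pair, the work consists of the $\bigO(n)$ polygon-containment test, an $\bigO(|\Cover|)$ scan to assemble $S_B$, and constant-time cost arithmetic. Summing over all $\bigO(|\Cover|^2)$ pairs gives $\bigO(|\Cover|^2 \cdot (n+|\Cover|))$; in the regime $|\Cover|=\bigO(n)$ that holds for the covers produced by the algorithms of \autoref{sect:algorithms} (cf.\ \autoref{lem:compute_baserectgraph} and the strip-cover size analysis), this collapses to $\bigO(|\Cover|^2 \cdot n)$ as claimed. Correctness — that the set remains a cover — is immediate: every point previously covered by some $\Rect \in S_B$ remains covered by $B \supseteq \Rect$, and $B \subseteq \Interior(\Polygon)$ by construction, so nothing is lost and nothing outside the polygon is introduced. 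Residual rectangles that happen to lie inside $B$ but were not selected by a given pair are eliminated by a subsequent invocation of \autoref{lem:prune}, consistent with the paper's chainable postprocessor design.

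The hard part will be the polygon-containment certificate: in the simple-join case one can exploit axis alignment to reduce the test to a single diagonal ray-shoot, but here $B$ can be any axis-aligned rectangle in the arrangement, so a full edge-list sweep per pair appears unavoidable without heavier preprocessing. A secondary subtlety is the $\bigO(|\Cover|)$ cost of collecting $S_B$, which could in principle be shaved to $\bigO(\log|\Cover|+|S_B|)$ with an orthogonal range structure on the corners of $\Cover$ (preserving the bound even beyond $|\Cover|=\bigO(n)$), but such refinement is not required to match the stated complexity. Finally, although a truly $k$-way join with $k \geq 3$ may have a bounding box whose four extreme coordinates are contributed by more than two rectangles of $\Cover$ and is therefore not directly enumerated as a pair candidate, such larger joins emerge incrementally over repeated applications of the full join interleaved with prune and trim, so bounding a single pass by the pairwise candidate space is both correct and sufficient.
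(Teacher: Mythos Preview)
Your approach diverges from the paper's in both the algorithm you analyze and the bound you actually establish, and the second point leaves a small but real gap.

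The paper's full join is purely pairwise and incremental: fix an order on $\Cover$, and for each rectangle $\Rect$ in turn, repeatedly try to join it with a single successor; if the two-rectangle join is beneficial and the bounding box lies in $\Interior(\Polygon)$, replace the pair by the box (which stays at $\Rect$'s position) and continue scanning further successors. There is no $S_B$ step at all. Each of the $\bigO(|\Cover|^2)$ pairwise tests costs $\bigO(n)$ for the edge-intersection containment check, giving $\bigO(|\Cover|^2\cdot n)$ directly and unconditionally.

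Your version, by contrast, adds an $\bigO(|\Cover|)$ scan per pair to assemble $S_B$, which yields $\bigO(|\Cover|^2\cdot(n+|\Cover|))$. You then discharge the extra $|\Cover|$ term by assuming $|\Cover|=\bigO(n)$. That assumption is \emph{not} part of the lemma statement, which is phrased for an arbitrary cover $\Cover$; a postprocessor can be chained after other postprocessors or fed covers not produced by the base algorithms of \autoref{sect:algorithms}. So as written you have proved a weaker statement. The fix is easy---either drop the $S_B$ step entirely (matching the paper's pairwise semantics) or actually carry out the orthogonal range-search refinement you allude to---but one of these must be done to match the lemma as stated.

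A secondary point: once joins modify $\Cover$ mid-enumeration, ``enumerate all $\binom{|\Cover|}{2}$ pairs'' needs a concrete iteration order to be well-defined and to guarantee the $\bigO(|\Cover|^2)$ count; the paper's fixed-order outer/inner loop makes this explicit, whereas your description leaves it implicit.
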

\begin{proof}
Fix an arbitrary order of the rectangles in $\Cover$.
For each $\Rect \in \Cover$, test whether a join with any of its
successors improves the cost of the cover and yields a rectangle that is fully
contained in $\Interior(\Polygon)$.
If this is the case, the joined rectangle takes the place of $\Rect$ in the
order and the algorithm tries to join it with further successors.
Checking whether the joined rectangle is contained in $\Interior(\Polygon)$ can
be accomplished in $\bigO(n)$ time by testing for intersections with the polygon's edges.
This can be necessary at most $\bigO({|\Cover|}^2)$ times.
\end{proof}

\section{Experiments}\label{sect:experiments}%
We implemented\footnote{\url{https://github.com/WeRecCover/WeRecCover}}
all algorithms described in \autoref{sect:algorithms} in
\Cpp{}17
and compare them against each other on a large and diverse set of instances,
having \num{186837} polygons in total.
To the best of our knowledge, this is the first experimental evaluation of
algorithms for the \WRC{} problem.
For the smaller instances, we include exact solutions based on the ILP
from
\autoref{sect:bip}.
It would have been interesting to compare our algorithms also to
the greedy set cover-based algorithm by Heinrich-Litan and Lübbecke~\cite{Heinrich-Litan_Lubbecke_2007} for the unweighted case.
Unfortunately, the code is no longer available.
\subsection{Methodology} %
We compiled our code with GCC 11.4\footnote{with \texttt{-O3 -march=native -mtune=native}}
and used Gurobi 10.02\footnote{\url{https://gurobi.com}} as ILP solver.
To load instances and for geometric computations, we rely on the
Computational Geometry Algorithms Library~\cite{cgal:eb-23b}. %
All our implementations are deterministic and single-threaded, but Gurobi is
inherently multi-threaded.
For this reason, we do not compare the running times of the ILP-based algorithm
with the others, but only the solution quality.
All runtime experiments were conducted on a machine
with an Intel(R) Xeon(R) E5 %
CPU and \SI{1.5}{\tera\byte} of main memory, running Ubuntu Linux 22.04 (Kernel
5.15).
To counteract errors of measurement, every experiment was assigned to an
exclusive core and carried out three times.
For each algorithm, polygon, and cost function, we report the \emph{absolute solution
quality}, \ie, the total cost of the cover, and median of the \emph{absolute running time}.
Furthermore, for each such triple, we computed the \emph{relative solution quality} and
and \emph{relative running time} as the ratio of the corresponding absolute value
and the best value obtained by any algorithm for this polygon and cost function.
We set a timeout of \SI{1}{\hour} per polygon for the ILP solver,
and \SI{4}{\hour} per instance for the other algorithms.

\subsection{Algorithms, Instances, and Cost Functions} %
\begin{table}[tb]
\small
\centering
\caption{Algorithms evaluated experimentally.}\label{tab:algorithms}
\renewcommand{\arraystretch}{0.8}
\begin{tabular}{@{}lcc@{}}
\toprule
Base Algorithm & Postprocessors & Short Name \\
\midrule
\AlgName{partition} & -- & \AlgName{par} \\
 & \AlgName{simple join} & \AlgName{par-j} \\
 & \AlgName{full join} & \AlgName{par-f} \\
\midrule
\AlgName{strip} & -- & \AlgName{strip} \\
 & \AlgName{prune}, \AlgName{trim} & \AlgName{strip-pt} \\
 & \AlgName{prune}, \AlgName{trim}, \AlgName{bb-split} & \AlgName{strip-ptb} \\
 & \AlgName{prune}, \AlgName{trim}, \AlgName{par-split} & \AlgName{strip-pts} \\
\midrule
\AlgName{greedy} & -- & \AlgName{grdy} \\
 & \AlgName{prune}, \AlgName{trim} & \AlgName{grdy-pt} \\
\midrule
\AlgName{ilp} & -- & \AlgName{ilp} \\
\bottomrule
\end{tabular}
\end{table}
\begin{table}[tb]
\small
\centering
\caption{Instances used in the evaluation, where
\#$\Polygon$: \#non-trivial polygons,
$\max\Holes$/$\max\AllVertices$/$\max\GridRects$/$\max\BaseRects$: max.\ \#holes/vertices/grid rectangles/base rectangles.} %
\label{tab:instances}
\begin{tabular}{@{}lrrrrrr@{}}
\toprule
data set               & \#$\Polygon$ & $\max\Holes$ & $\max\AllVertices$ & $\max\GridRects$ & $\max\BaseRects$ & $\max\frac{\GridRects}{\BaseRects}$\\
\midrule
\InstanceSet{ccitt}    &  \num{11970} & \num{192}      & \num{27478}        & \num{539347} & \num{159132} & \num{3.39}\\
\InstanceSet{icons}    &  \num{211}   & \num{10}      & \num{108}         & \num{134} & \num{134} & \num{1.86}\\
\InstanceSet{nasa}     &  \num{13230} & \num{9804}     & \num{124204}       & \num{1826606} & \num{1208309} & \num{3.43}\\
\InstanceSet{caltech}  &  \num{22989} & \num{1929}    & \num{10986}       & \num{55379} & \num{42982} & \num{3.69}\\
\InstanceSet{textures} &  \num{2674}  & \num{16249}    & \num{160132}       & \num{825464} & \num{713421} & \num{2.94}\\
\InstanceSet{aerials}  &  \num{131818}& \num{24079}    & \num{242330}       & \num{583977} & \num{879458} & \num{1.76}\\
\InstanceSet{dats}     &  \num{3919}  & \num{14530}    & \num{69740}       & \num{250239} & \num{235841} & \num{3.56}\\
\InstanceSet{cgshop}   &  \num{26}    & \num{67}      & \num{94778}       & \num{26293767} & \num{988050} & \num{145.84}\\
\bottomrule
\end{tabular}
\end{table}
\paragraph{Algorithms.}
We included all algorithms described in \autoref{sect:algorithms} and combined them with
different postprocessing routines.
\autoref{tab:algorithms} gives an overview of the different combinations we
evaluated and the shorthand names we use in the discussion of the results.

\paragraph{Instances.}
We used a diverse collection of instances from experimental evaluations of
related problems~\cite{Heinrich-Litan_Lubbecke_2007,Koch_Marenco_2022}.
Where necessary, we converted the instances to the WKT\footnote{Well-known
text representation of geometry} format first. With the exception of the \InstanceSet{dats} and \InstanceSet{cgshop} instance sets, all sets of instances were retrieved from Koch and Marenco's~\cite{Koch_Marenco_2022} repository of images they binarized\footnote{\label{imgsrc:repo}\url{https://drive.google.com/drive/folders/1EPj1w_P8Bgg_86dCzOWJVu3JnFsEbrPO}~\cite{Koch_Marenco_2022}}.
Even though our algorithms can also handle real-valued polygons, all instances
here only use integer coordinates.
Our instance sets are:
\\
\InstanceSet{ccitt}: fax test images, originally provided to compare compression
  algorithms;
\InstanceSet{icons}: small black and white icons ranging from 6x6 to 24x24 pixels;
\InstanceSet{nasa}: binary versions of large satellite images provided by NASA;
\InstanceSet{caltech}: binary versions of various standard, medium-sized images used in computer vision research;
\InstanceSet{textures}: binary versions of texture images used in image processing research;
\InstanceSet{aerials}: binary versions of aerial photographs used in image processing research;
\InstanceSet{dats}: black and white images introduced %
in~\cite{Heinrich-Litan_Lubbecke_2007};
\InstanceSet{cgshop}: rectilinear instances from the \mbox{CG:SHOP} 2023 competition\footnote{\url{https://cgshop.ibr.cs.tu-bs.de/competition/cg-shop-2023/}}.

Each instance in these collections is a ``multipolygon``, \ie, it can consist
of multiple polygons.
Our algorithms solve each polygon individually and independently.
In many cases, hole-free rectangles are among these polygons, which can be
covered optimally by a single rectangle.
This case is easy to recognize and handle by our preprocessor, as it does not
provide any insights into the performance of the main algorithm.
We therefore do not consider these trivial polygons in our evaluation any further.
To facilitate an evaluation w.r.t.\ the metrics of a polygon, we report
solution quality and running time for each non-trivial polygon in an instance
rather than the sum over all polygons in the multipolygon.
See \autoref{tab:instances} for a summary of statistics, 
including the maximum number of grid and base rectangles for each instance set
as well as the maximum ratio between them.
The average ratio of grid and base rectangles was between \num{1.01} (\InstanceSet{textures}) and \num{11.24} (\InstanceSet{cgshop}).
The maximum size of the powerset of base rectangles, which is used by \AlgName{grdy} and \AlgName{ilp}, was between \num{1793} (\InstanceSet{icons}) and \num{7901671022} (\InstanceSet{nasa}).
In total, our study comprises \num{186837} non-trivial polygons.

\paragraph{Cost Functions.}
As the different tradeoffs between rectangle creation cost ($\alpha$) and rectangle area cost ($\beta$) mainly
depend on the ratio of these two values, we only vary $\alpha$ in our experiments and keep $\beta = 1$ fixed.
For all instances, we ran experiments with $\alpha \in \{1,10,50,100,500,1000\}$.
Recall from \autoref{lem:min_baserect} that if $\alpha \leq \beta \cdot A_{\min}$, where $A_{\min}$
is the smallest area of any base rectangle, the partition algorithm is optimal.
As our instances are integer-valued, the partition algorithm is guaranteed optimal for $\alpha \leq 1$.

\subsection{Results} %
We only ran algorithm \AlgName{ilp} on polygons from the instance sets with smaller polygons
\InstanceSet{icons}, \InstanceSet{dats}, and \InstanceSet{cgshop}.
Still, \AlgName{ilp} ran into timeouts or was terminated due to insufficient
memory on polygons from two \InstanceSet{dats} instances and five
\InstanceSet{cgshop} instances.
Similarly, algorithm \AlgName{grdy} ran into several timeouts and memory
allocation failures on polygons from \InstanceSet{aerials}, \InstanceSet{nasa},
\InstanceSet{textures}, and \InstanceSet{dats}, and \InstanceSet{cgshop}.
We therefore do not compare \AlgName{grdy} on polygons from these instances.
The two instances with the largest number of base rectangles on which \AlgName{grdy}
terminated are from the \InstanceSet{aerials} instance set and have \num{988050} base rectangles (size of powerset: \num{1.53}~billion).
All instance on which \AlgName{grdy} terminated have at most \num{108970} vertices.
\AlgName{par-f} ran into a timeout for one instance of \InstanceSet{aerials}
and one of \InstanceSet{textures}, where $\alpha \geq 500$.
As these were the only timeouts, we let it finish nonetheless in order to
have a complete set of results for the evaluation.
\paragraph{Solution Quality.}
\begin{figure}[tb]
\centering
\includegraphics[width=.49\linewidth]{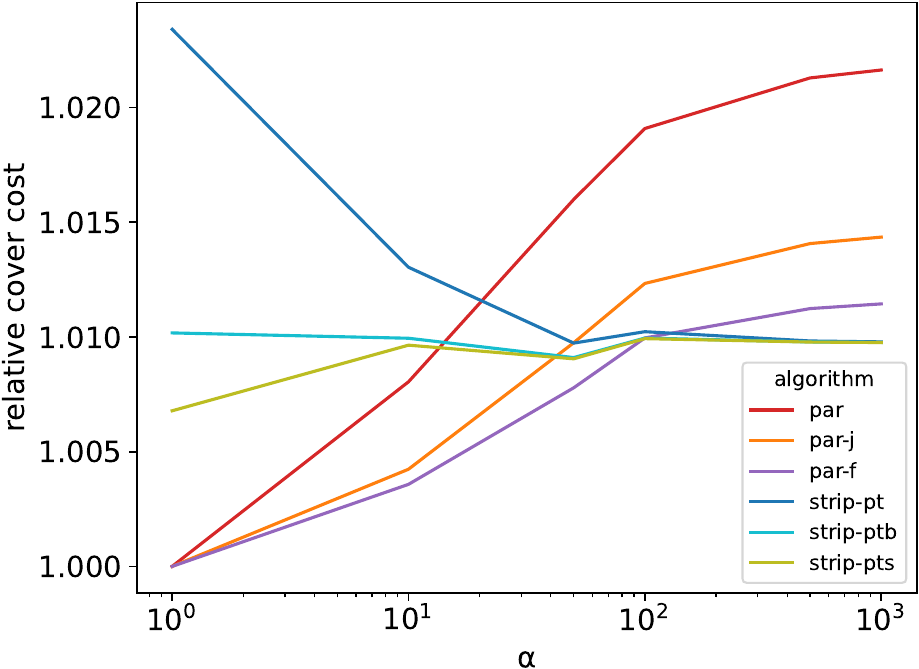}
\includegraphics[width=.49\linewidth]{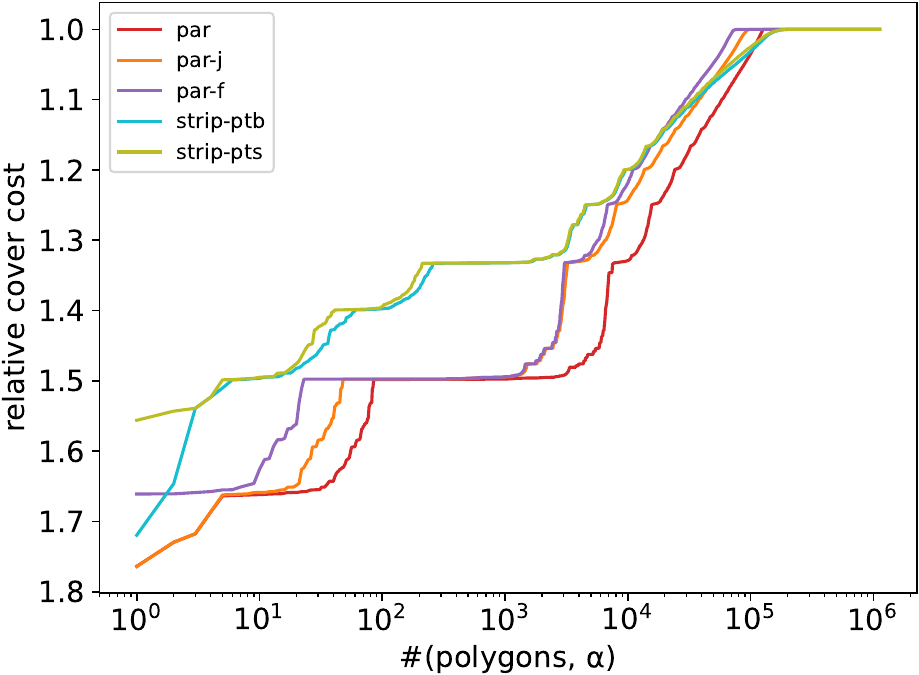}
\caption{%
Mean relative cost for different values of $\alpha$ across all instances (left),
and a performance plot showing on how many polygon/$\alpha$ pairs each algorithm returned a solution with at least a given relative cover cost (right).
In both cases, note the logarithmic x axis.
}%
\label{fig:alpha-vs-relcost}
\end{figure}

Across all instances, \AlgName{par-f} gave on average the best solution
quality for $\alpha \leq 100$,
whereas for $\alpha \geq 100$, \AlgName{strip-ptb} and \AlgName{strip-pts} exceeded
all others, see also \autoref{fig:alpha-vs-relcost} (left).
On polygons from \InstanceSet{caltech}, \InstanceSet{cgshop}, and \InstanceSet{dats},
\AlgName{par-f} yielded on average the best cover for all values of $\alpha$,
and was only slightly inferior to \AlgName{strip-pts} on \InstanceSet{aerials}
for $\alpha > 500$.
On \InstanceSet{ccitt}, \InstanceSet{icons}, \InstanceSet{nasa}, and
\InstanceSet{textures}, \AlgName{strip-pts} outperformed \AlgName{par-f}
already for $\alpha \geq 100$ or less (cf.\ \autoref{fig:alpha-vs-relcost-appdx}).

\begin{figure}[tb]
\centering
\includegraphics[width=.49\linewidth]{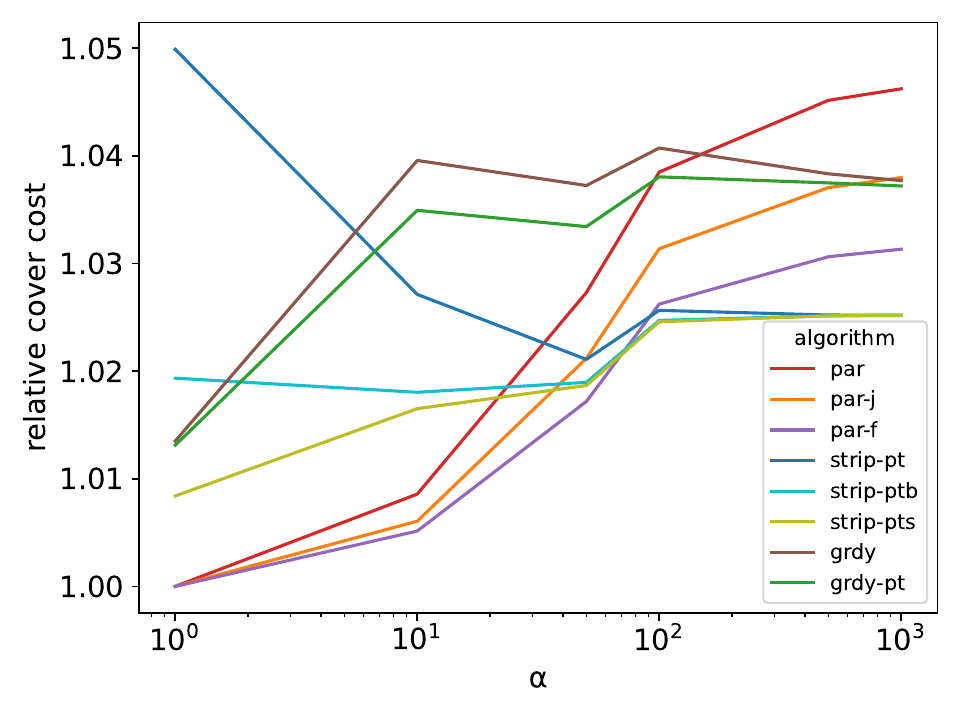}
\includegraphics[width=.49\linewidth]{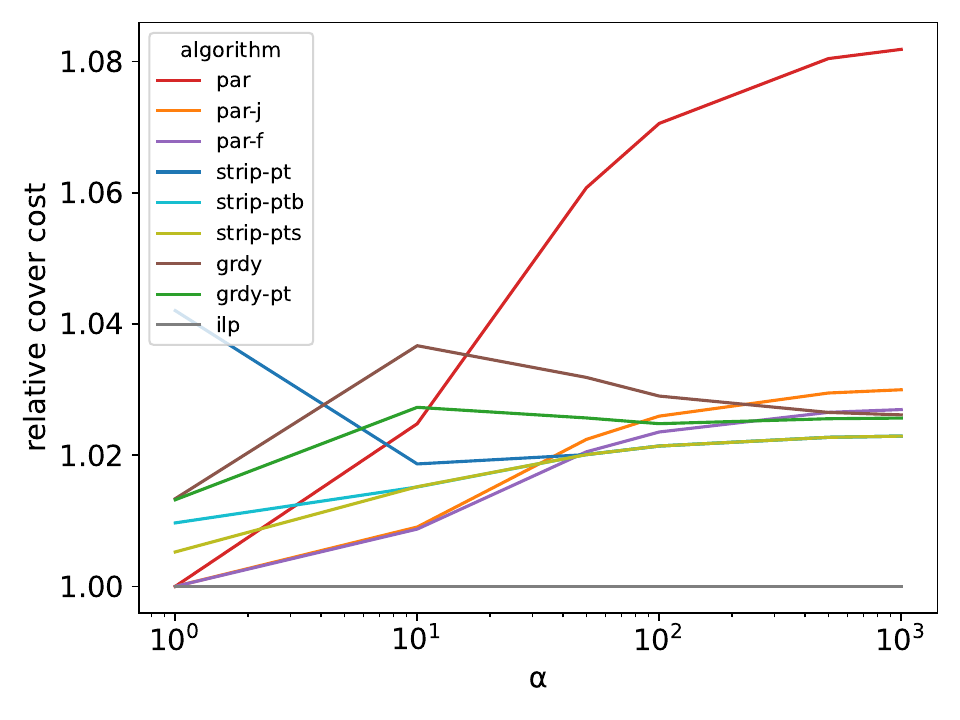}
\caption{%
Mean relative cost for different values of $\alpha$, across all instances where \AlgName{grdy} terminated, \ie, \InstanceSet{icons}, \InstanceSet{ccitt}, and \InstanceSet{caltech} (left)
and only on \InstanceSet{icons} (right).
In both cases, note the logarithmic x axis.
}%
\label{fig:alpha-vs-relcost-appdx}
\end{figure}

The performance of \AlgName{strip-pt}, \AlgName{strip-ptb}, and \AlgName{strip-pts} is similar
for large values of $\alpha$, whereas \AlgName{strip-pt} and \AlgName{strip-ptb} are
clearly inferior to \AlgName{strip-pts} for smaller values.
The solution quality was significantly impaired if \AlgName{strip} was used
without \AlgName{trim} and yielded on average up to \SI{44}{\percent} and
\SI{23}{\percent} heavier covers for small $\alpha$.
As $\alpha$ grows, the gap in average relative solution cost widened for
\AlgName{par}, and \AlgName{par-j}, and \AlgName{par-f}, showing that
the postprocessors gain in effectiveness for large $\alpha$.
Recall that \AlgName{strip} and \AlgName{par} without postprocessors do not
consider the given cost function, hence their inferior performance is not
surprising.
On instance sets where \AlgName{grdy} yielded a  result for all polygons, \ie,
\InstanceSet{icons}, \InstanceSet{caltech}, and \InstanceSet{ccitt}, both
\AlgName{grdy} and \AlgName{grdy-pt} were not significantly better on average
than \AlgName{par-f} and \AlgName{strip-pts} and partially also \AlgName{par-j}
for any value of $\alpha$, and only better than \AlgName{strip-ptb} for $\alpha
= 1$ (cf.\ \autoref{fig:alpha-vs-relcost-appdx}).

Looking at the maximum relative cost rather than the average, \AlgName{par-j},
\AlgName{par-f} and \AlgName{strip-pts} never exceeded the best solution by
more than \SI{76}{\percent}, \SI{66}{\percent}, and \SI{56}{\percent},
respectively (cf.\ \autoref{fig:alpha-vs-relcost}), and found the best cover
in around \SI{90}{\percent} of all cases.
By contrast, \AlgName{strip} and \AlgName{strip-pt} produced solutions that are
by a factor of \num{80} and \num{4.7}, respectively, larger than the best.
For $\alpha \geq 50$, \AlgName{strip-pts} always had the smallest maximum
relative cost when looking at all instance sets.
The picture is also similar on each instance set individually, except for \InstanceSet{icons},
where both \AlgName{par-j} and \AlgName{par-f} outperformed \AlgName{strip-pts} for all values of $\alpha$
and never returned a cover with weight \SI{33}{\percent} larger than the optimal.

\emph{In summary, if worst-case performance is prioritized, \AlgName{strip-pts}
is the candidate of choice for $\alpha \geq 50$, whereas where average performance
is considered, \AlgName{par-f} is superior for small values of $\alpha$,
followed by \AlgName{par-j}.
Furthermore, all postprocessing routines proved themselves very effective in increasing
the solution quality.}

\paragraph{Running Time.}
The fastest algorithm across all cost functions and instance sets was
\AlgName{par} with mean relative running times of around \num{1.00},
very closely followed by \AlgName{par-j} (around \num{1.06}).
\AlgName{par-f} was only slightly slower than \AlgName{par-j} except for
\InstanceSet{cgshop}, where its mean relative running time was around \num{2.65}.
\AlgName{strip}, \AlgName{strip-p}, and \AlgName{strip-pt} on average had a relative
running time of \numrange{7.7}{8.0}, whereas \AlgName{strip-ptb} and \AlgName{strip-pts}
were on average by a factor of \num{10} and \num{11} slower than \AlgName{par}.
On those instances where \AlgName{grdy} and \AlgName{grdy-pt} completed, they
had relative running times of around \num{7.3} (\InstanceSet{caltech}),
\num{3.4} (\InstanceSet{ccitt}), and \num{1.7} (\InstanceSet{icons}).
Similar to \AlgName{strip-pt}, the postprocessing of \AlgName{grdy-pt} only led
to a small increase in running time in comparison to \AlgName{grdy}.
On \InstanceSet{ccitt} and \InstanceSet{icons}, \AlgName{grdy} and \AlgName{grdy-pt}
were faster on average than \AlgName{strip-ptb} and \AlgName{strip-pts}.

Only \AlgName{par-f} showed a strong dependency between running time and $\alpha$,
which generally increased together with $\alpha$ on all instance sets.
The strongest increase was observed on \InstanceSet{cgshop}, where
the maximum running time for a polygon was \SI{239}{\second} for $\alpha=1$ and \SI{1618}{\second}
for $\alpha=\num{1000}$.
The mean running time rose from \SI{13}{\second} to \SI{83}{\second}.
The observed increase is due to the implementation of the full join postprocessor:
When considering a join of two rectangles, it first checks whether this
would improve the solution, and only then checks whether the joined
rectangle actually lies within the polygon's interior.
The larger the creation cost $\alpha$, the more often
a join reduces the weight of the cover and the second check becomes necessary.

All algorithms finished within \SI{3}{\milli\second} on \InstanceSet{icons}.
Across all polygons, the maximum running time of the \AlgName{par}- and
\AlgName{strip}-based algorithms, except \AlgName{par-f}, was between
\SI{51}{\minute} and \SI{58}{\minute}.
\begin{figure}[tb]
\centering
\includegraphics[width=.49\linewidth]{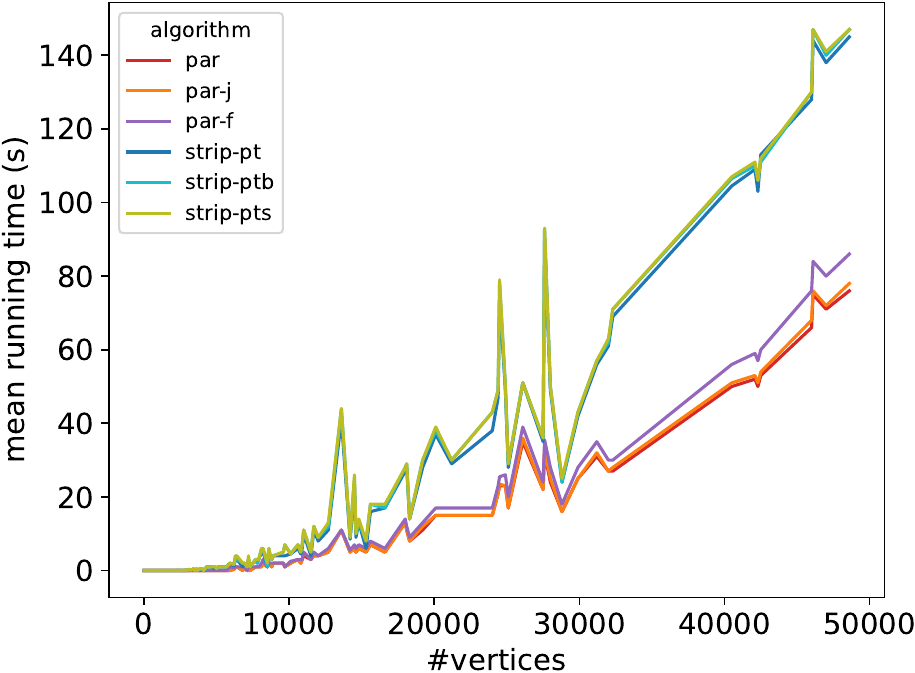}
\includegraphics[width=.49\linewidth]{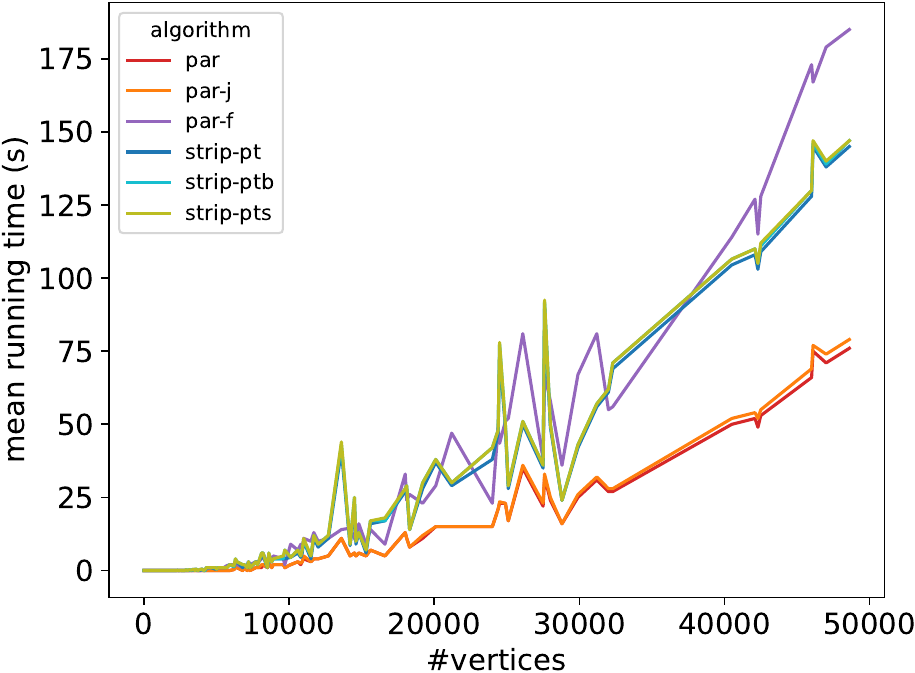}
\caption{%
Running time vs.\ number of vertices in the polygon for $\alpha = 10$ (left)
and $\alpha=100$ (right).
}%
\label{fig:time-byN}
\end{figure}
To asses the empirical running time of the \AlgName{par}- and
\AlgName{strip}-based algorithms further, we grouped the polygons by their
number of vertices in steps of \num{100} and computed the arithmetic mean
running time for every algorithm and number of vertices group.
The results, depicted in \autoref{fig:time-byN} for
$\alpha = 10$ and $\alpha = 100$, show a super-linear dependency of the running time on the
number of vertices also in practice and again demonstrate the runtime dependency of
\AlgName{par-f} on $\alpha$ in practice.

\emph{In summary, \AlgName{par-f} is the algorithm of choice for small values of $\alpha$.
It is very fast and offered the best solution quality in this setting.
For larger values of $\alpha$, \AlgName{strip-pts} is well-suited if solution quality is
important and its higher running time is acceptable, which should in particular be the
case for polygons with fewer vertices.
If both $\alpha$ and the polygon are large, our experiments suggest \AlgName{par-j} as a good compromise
between running time and solution quality.
}
\section{Conclusion and Future Work}\label{sect:conclusion}%
In this paper, we initiated the study of the \textsc{Weighted Rectilinear Cover} problem.
We introduced the concept of base rectangles and demonstrated that our new heuristics are both fast and deliver \emph{close-to-optimal} results.
In particular, they are faster and better than the greedy set cover approximation algorithm.
We focused on rectangle cost functions that take a rather specific, yet practical form, and were mainly interested in the solution quality that can be achieved in practice.

A logical next step would be to look into more general cost functions.
As long as an optimal solution can be built from base rectangles, our algorithms can be directly applied. %
Another direction is to focus on the time and memory requirements of algorithms that achieve similar solution costs.
A more theoretical question is, if improved, i.e. sub-logarithmic, approximation bounds for the \WRC{} problem are possible.

\clearpage
\bibliographystyle{plain}  %
\bibliography{paper}

\clearpage

\end{document}